\newtheorem{thm}{Theorem}
\newtheorem*{thm*}{Theorem}
\newtheorem{cor}{Corollary}
\newtheorem{prop}{Proposition}
\theoremstyle{definition}
\theoremstyle{remark}
\newtheorem*{rem}{Remark}
\newlength{\bracewidth}
\newcommand{\myunderbrace}[2]{\settowidth{\bracewidth}{$#1$}#1\hspace*{-1\bracewidth}\smash{\underbrace{\makebox{\phantom{$#1$}}}_{#2}}}
\newcommand*\mystrut[1]{\vrule width0pt height0pt depth#1\relax}
\newcommand{\ii}{\mathrm{i}}
\newcommand{\ie}{{\it i.e.},\ }
\newcommand{\eg}{{\it e.g.},\ }
\newcommand{\id}{\mathbb{1}}
\begin{document}
\title{Linear growth of the entanglement entropy for quadratic Hamiltonians and arbitrary initial states}
\author[1]{Giacomo De Palma\thanks{giacomo.depalma@sns.it}}
\author[2]{Lucas Hackl\thanks{lucas.hackl@unimelb.edu.au}}
\affil[1]{Scuola Normale Superiore, 56126 Pisa, Italy }
\affil[2]{School of Mathematics and Statistics \& School of Physics, The University of Melbourne, Parkville, VIC 3010, Australia}

\maketitle

\begin{abstract}
We prove that the entanglement entropy of any pure initial state of a bipartite bosonic quantum system grows linearly in time with respect to the dynamics induced by any unstable quadratic Hamiltonian. The growth rate does not depend on the initial state and is equal to the sum of certain Lyapunov exponents of the corresponding classical dynamics. This paper generalizes the findings of [Bianchi \emph{et al.}, JHEP 2018, 25 (2018)], which proves the same result in the special case of Gaussian initial states. Our proof is based on a recent generalization of the strong subadditivity of the von Neumann entropy for bosonic quantum systems [De Palma \emph{et al.}, arXiv:2105.05627]. This technique allows us to extend our result to generic mixed initial states, with the squashed entanglement providing the right generalization of the entanglement entropy. We discuss several applications of our results to physical systems with (weakly) interacting Hamiltonians and periodically driven quantum systems, including certain quantum field theory models.
\end{abstract}

\section{Introduction}
Entanglement is a cornerstone of quantum theory and its dynamics has been extensively studied in a wide range of different systems~\cite{calabrese2009entanglement,kim2013ballistic,roberts2015localized,cotler2016entanglement,mezei2017entanglement}. It also provides an important link between classical chaos and quantum chaos in the context of Lyapunov instabilities~\cite{zurek1994decoherence}. The most prominent entanglement measure for pure states (\ie rank-one projectors) is the entanglement entropy \cite{srednicki1993entropy,eisert2010colloquium}.
The entanglement entropy of the pure state $\rho$ of the bipartite quantum system $AB$\footnote{A \emph{quantum system} $A$ is given by a Hilbert space $\mathcal{H}_A$.
A \emph{quantum state} of $A$ is a positive semidefinite linear operator with unit trace acting on $\mathcal{H}_A$.
Given two quantum systems $A$ and $B$ with Hilbert spaces $\mathcal{H}_{A}$ and $\mathcal{H}_{B}$, respectively, their union is the bipartite quantum system $AB$ with Hilbert space $\mathcal{H}_{AB} = \mathcal{H}_{A}\otimes\mathcal{H}_{B}$.
Given a quantum state $\rho$ of $AB$, its \emph{marginal state} on $A$ is $\rho_{A} = \mathrm{Tr}_{B}\rho$, where $\mathrm{Tr}_{B}$ denotes the partial trace over $\mathcal{H}_{B}$.} is defined as
\begin{equation}
    S(A)(\rho) = -\mathrm{Tr}\left[\rho_{A}\ln\rho_{A}\right]\,,
\end{equation}
where $S$ denotes the von Neumann entropy \cite{nielsen2010quantum,wilde2017quantum,holevo2019quantum}.
Saturation of the entanglement entropy is considered as signature of thermalization or equilibration. The transition from an initially linear growth to such an eventual saturation has been also studied in the context of quenches of both integrable and non-integrable systems~\cite{alba2017entanglement,alba2018entanglement}.
More recently, out-of-time-order correlators have been used as valuable tool to describe thermodynamic processes and to define quantum Lyapunov exponents~\cite{shenker2014black,maldacena2016bound}.

To our knowledge, the connection between linear growth of the entanglement entropy and classical Lyapunov exponents was first observed by Asplund and Berenstein in~\cite{asplund2016entanglement} for a stroboscobic Hamiltonian coupling two bosonic modes. They found that the entanglement entropy grew as the sum over the positive Lyapunov exponents and already conjectured how this finding should apply more generally. This conjecture was proven for the case of time-dependent quadratic Hamiltonians with Gaussian initial states in~\cite{bianchi2015entanglement,bianchi2018linear}. In particular, this lead to an algorithm~\cite{bianchi2018linear} for determining which Lyapunov exponents need to be summed for any chosen subsystem and it showed under what conditions the growth rate agreed with the famous classical Kolmogorov--Sinai entropy rate. More recently, Modak, Rigol, Bianchi and one of the present authors gave numerical evidence~\cite{hackl2018entanglement} that the same growth rates also apply to non-Gaussian initial states and identified a subleading logarithmic correction for a certain class of ``meta-stable'' Hamiltonians. Linear growth was also observed in a toy model for time evolving QFT with controllable chaos~\cite{berenstein2018toy}.

On a technical level, the main proof~\cite{bianchi2018linear} for linear growth for Gaussian initial states was based on relating the asymptotic growth of the entanglement entropy
\begin{align}
    S(A)(t)\sim \ln\mathrm{Vol}\,\mathcal{V}_{A}(t) \label{eq:volume-stretching}
\end{align}
to the volume of a time-dependent parallelepiped $\mathcal{V}_{A}(t)$ in the dual phase space. The resulting asymptotics applies to arbitrary pure Gaussian states and also serves as an upper bound for arbitrary initial states with finite covariance matrix, as Gaussian states have maximal entropy among all states with given covariance matrix.
Proving that the respective growth rate applies to arbitrary (non-Gaussian) initial states requires us to bound the entanglement entropy from below, which is in general a very difficult problem.
A recent work~\cite{de2021generalized} by Trevisan and one of the present authors introduced a novel relation between the mutual information of Gaussian and non-Gaussian states. More precisely, let $AB$ be a bipartite bosonic quantum system, let $M$ be a symplectic transformation\footnote{A bosonic system with $N$ modes is classically described by a phase space $V\simeq\mathbb{R}^{2N}$ equipped with an anti-symmetric, nondegenerate bilinear form $\Omega: V^*\times V^*\to\mathbb{R}$, known as symplectic form. A linear transformation $M: V\to V$ is called symplectic when it preserves $\Omega$, \ie $M\Omega M^\intercal=\Omega$.} and let $U_M$ be the unitary operator that implements $M$ in the Hilbert space of $AB$, \ie $U_M$ implements the linear transformation of the quadratures given by $M$.
Then, for any (generically mixed) state $\rho$ of $AB$ with finite covariance matrix, we have
\begin{align}
    I(A;B)(\rho)+I(A;B)(\mathcal{U}_M(\rho))\geq \inf_{\sigma\,\mathrm{Gaussian}}\left(I(A;B)(\sigma)+I(A;B)(\mathcal{U}_M(\rho))\right)\,,\label{eq:main-ineq}
\end{align}
where
\begin{equation}
I(A;B)(\rho)=S(A)(\rho)+S(B)(\rho)-S(AB)(\rho)
\end{equation}
is the mutual information of $\rho$ across the subsystems $A$ and $B$ \cite{nielsen2010quantum,holevo2019quantum,wilde2017quantum} and $\mathcal{U}_M$ is the quantum channel associated to $U_M$, \ie
\begin{equation}
    \mathcal{U}_M(\rho) = U_M\,\rho\,U_M^\dag\,.\label{eq:Uchannel}
\end{equation}
The left-hand side of~\eqref{eq:main-ineq} is bounded from below by a minimization over (generally mixed) Gaussian states $\sigma$. To use this result for bounding the growth of the entanglement entropy, we need to consider a time-dependent symplectic transformation $M(t)$ describing the classical evolution induced by a quadratic Hamiltonian and determine the growth of the right-hand side in~\eqref{eq:main-ineq}. While it is easy to show that the right-hand side will generally grow linearly in time for a fixed Gaussian state $\sigma$, it turns out to be non-trivial to show that taking the time asymptotics for fixed $\sigma$ commutes with taking the infimum over $\sigma$ for fixed time $t$. In~\cite{de2021generalized}, this was only done for the special class of time-independent Hamiltonians giving rise positive-definite symplectic transformations $M(t)$, for which a linear growth with undetermined coefficient was found.

The present paper combines these recent findings of~\cite{de2021generalized} with the insights about the entanglement entropy growth for Gaussian states of \cite{bianchi2018linear}. This allows us to treat the most general case of an arbitrary time-dependent quadratic Hamiltonian $\hat{H}(t)$ (with well-defined Lyapunov exponents) and an arbitrary pure initial state $\rho$ of a bipartite bosonic quantum system $AB$, for which we prove that the entanglement entropy grows as
\begin{align}
    S(A)(t) = \Lambda_A\, t + o(t)\quad \text{as}\quad t\to\infty\,,
\end{align}
where $\Lambda_A=\sum^{2N_A}_{j=1}\lambda_{i_i}$ is the subsystem coefficient associated to $A$ computed as a sum over $2N_A$ Lyapunov exponents $\lambda_i$ according to algorithm from~\cite{bianchi2018linear}, and $N_A$ is the number of modes of $A$.

We also extend this result to mixed states, where the entanglement entropy is replaced by the squashed entanglement, also called CMI entanglement \cite{tucci1999quantum,tucci2000separability,tucci2000entanglement,tucci2001relaxation,tucci2001entanglement,tucci2002entanglement,christandl2004squashed,brandao2011faithful,seshadreesan2015renyi,shirokov2016squashed}.
Let $\rho$ be a state of the bipartite quantum system $AB$ such that both $S(A)(\rho)$ and $S(B)(\rho)$ are finite.
The squashed entanglement of $\rho$ is defined as the following infimum over all the possible extensions $\tilde{\rho}$ of $\rho$ on a tripartite quantum system $ABR$, where $R$ is an arbitrary finite-dimensional quantum system\footnote{The minimization in \eqref{eq:Esq} should include also auxiliary quantum systems $R$ with infinite dimension. However, \cite[Lemma 7]{shirokov2016squashed} proves that we can restrict to finite-dimensional $R$ whenever $I(A;B)(\rho)$ is finite.}:
\begin{equation}\label{eq:Esq}
    E_{\mathrm{sq}}(\rho) = \frac{1}{2}\inf\left\{I(A;B|R)(\tilde{\rho}):\mathrm{Tr}_R\tilde{\rho}=\rho,\;\dim\mathcal{H}_R<\infty\right\}\,.
\end{equation}
Here
\begin{equation}
I(A;B|R) = S(A|R) + S(B|R) - S(AB|R)
\end{equation}
is the quantum conditional mutual information, and
\begin{equation}
    S(A|R) = S(AR) - S(R)
\end{equation}
is the conditional von Neumann entropy.
The squashed entanglement is a faithful entanglement measure, \ie it is zero iff the state is separable\footnote{This is guaranteed when $S(AB)(\rho)$, $S(A)(\rho)$ and $S(B)(\rho)$ are not all infinite \cite[Proposition 8]{shirokov2016squashed}.}, and it does not increase under any composition of local operations performed on the subsystems $A$ and $B$ with the possible help of unlimited classical communication between $A$ and $B$.
Moreover, the squashed entanglement of any pure state coincides with the entanglement entropy.
The squashed entanglement is one of the two main entanglement measures in quantum communication theory: it provides one of the best known upper bound to the length of a shared secret key that can be generated by two parties holding many copies of the quantum state \cite{christandl2004squashed,christandl2007unifying,li2014relative,wilde2016squashed} and has applications in recoverability theory \cite{seshadreesan2015fidelity,li2018squashed} and multiparty information theory \cite{adesso2007coexistence,avis2008distributed,yang2009squashed}.

Let us emphasize that while our rigorous results describe the long-time asymptotics $t\to\infty$ of quadratic Hamiltonians, we will discuss their physical significance and applications in the context of interacting and periodically driven systems, where this asymptotics describes an intermediate phase before the entanglement entropy eventually saturates.\\

This manuscript is structured as follows: In \autoref{sec:linear}, we first review the results of~\cite{bianchi2018linear} and~\cite{de2021generalized} in order to prove the required propositions for our main results, \ie the linear growth of the entanglement entropy for arbitrary pure initial states, and its extension to squashed entanglement of mixed initial states. In \autoref{sec:logarithmic}, we use a simple toy model to demonstrate that the inequality~\eqref{eq:main-ineq} will not suffice to prove that the logarithmic growth for meta-stable Hamiltonians found in~\cite{hackl2018entanglement} also applies to arbitrary non-Gaussian states.  In \autoref{sec:applications}, we discuss physical applications and in which sense our long-time asymptotics actually describes an intermediate phase before the entanglement entropy saturates. Finally, we summarize our findings and provide an outlook in \autoref{sec:summary}.

\begin{table}[t]
    \centering\small
    \begin{tabular}{rl}
    \textbf{Symbol} & \textbf{Meaning}\\[1mm]
    \hline
    & \\[-3mm]
    $V$     & classical phase space\\
    $V^*$   & dual phase space (linear observables)\\
    $\Omega$ & symplectic form on $V^*$\\
    $\Omega_{2N}$ & $2N$-by-$2N$ matrix representation of $\Omega$\\
    $\hat{\xi}^a$ & quadrature basis $\hat{\xi}^a=(\hat{q}_1,\hat{p}_1,\dots,\hat{q}_N,\hat{p}_N)$ of $V^*$\\
    $M$ & symplectic transformation $M: V\to V$, such that $M\Omega M^\intercal=\Omega$\\
    $U_M$ & unitary transformation implementing $M$\\
    $\mathcal{U}_M$ & quantum channel $\mathcal{U}_M$, see~\eqref{eq:Uchannel}\\
    $\rho$ & general density operator $\rho: \mathcal{H}\to\mathcal{H}$\\
    $\sigma$ & Gaussian density operator\\
    $\sigma_{G,z}$ & Gaussian density operator with\\
    & covariance matrix $G$ and displacement $z$\\
    $z^a$ & displacement vector $z\in V$ of Gaussian state $\sigma_{G,z}$\\
    $G^{ab}$ & covariance matrix of Gaussian state $\sigma_{G,z}$\\
    $J^a{}_b$ & complex structure $J=G\Omega^{-1}$ of Gaussian state $\sigma_{G,z}$\\
    $\mathrm{tr}$, $\mathrm{Tr}$ & trace $\mathrm{tr}$ on classical phasespace and trace $\mathrm{Tr}$ on Hilbert space\\
    $S(A)(t)$ & entanglement entropy of subsystem $A$ at time $t$\\
    & (of pure state evolved with quadratic Hamiltonian)\\
    $I(A;B)(\rho)$ & mutual information of state $\rho$ with respect to subsystems $A$ and $B$\\
    $S_2(A)(\rho)$ & Rényi entropy (of order $2$) of state $\rho$ and subsystem $A$\\
    $E_{\mathrm{sq}}(\rho)$ & squashed entanglement of state $\rho$, see~\eqref{eq:Esq}\\
    $\hat{H}(t)$ & time-dependent Hamiltonian at time $t$\\
    $L(M)$ & limiting matrix of time-dependent symplectic transformation $M$, see~\eqref{eq:defL}\\
    $\lambda_\ell$ & Lyapunov exponent for time-dependent $M$ and dual vector $\ell\in V^*$, see~\eqref{eq:def-lyapunov}\\
    $\Lambda_A$ & subsystem exponent for subsystem $A$, see \eqref{eq:Lambda_A}\\
    $S_{\mathrm{as}}(C)(\alpha)$ & asymptotic von Neumann entropy of system $C$, see \eqref{eq:def-Sas}
    \end{tabular}
    \caption{\emph{Conventions and notation.} We list the most common symbols and how they are used in this manuscript.}
    \label{tab:conventions}
\end{table}

\section{Linear growth}\label{sec:linear}
In this section, we prove the main result of this manuscript, after we review its two main ingredients: the linear growth for Gaussian initial state proven in~\cite{bianchi2018linear} and the recently discovered generalized strong subadditivity of the von Neumann entropy proven in~\cite{de2021generalized}.

\subsection{Bosonic quantum systems}
We consider a bosonic quantum system $A$ with $N$ modes and classical phase space $V\simeq\mathbb{R}^{2N}$ equipped with an anti-symmetric, non-degerate bilinear form $\Omega: V^*\times V^*\to\mathbb{R}$ defined on the dual phase space. Classical linear observables are elements of the dual phase space $w_1,w_2\in V^*$ with canonical Poisson brackets $\{w_1,w_2\}=\Omega(w_1,w_2)$. Under quantization, these observables are promoted to operators\footnote{We will use hats on observables, such as $\hat{w}$, $\hat{\xi}^a$ and $\hat{H}$, but not on density operators $\rho$ and unitaries $U$.} $\hat{w}_1$ and $\hat{w}_2$ with canonical commutation relations given by $[\hat{w}_1,\hat{w}_2]=\ii\Omega(w_1,w_2)$.

We can choose a so-called \emph{Darboux} basis of $N$ canonically conjugate pairs $(\hat{q}_i,\hat{p}_i)$ of \emph{quadrature operators}
\begin{align}
    \hat{\xi}^a\equiv(\hat{q}_1,\hat{p}_1,\dots,\hat{q}_N,\hat{p}_N)\,,
\end{align}
such that the following canonical commutation relations are satisfied:
\begin{align}
    \left[\hat{\xi}^a,\,\hat{\xi}^b\right]=\ii\Omega_{2N}^{ab}\quad\text{with}\quad \Omega_{2N}\equiv\bigoplus^N_{i=1}\begin{pmatrix}
    0 & 1\\
    -1 & 0
    \end{pmatrix}\,.\label{eq:Omega-sta}
\end{align}

A (potentially mixed) Gaussian state $\sigma_{G,z}$ is fully characterized by its covariance matrix $G$ and displacement vector $z$ given by
\begin{align}
    z^a=\mathrm{Tr}\left[\sigma_{G,z}\,\hat{\xi}^a\right]\quad\text{and}\quad
    G^{ab}=\mathrm{Tr}\left[\hat{\xi}^a\,\sigma_{G,z}\,\hat{\xi}^b+\hat{\xi}^b\,\sigma_{G,z}\,\hat{\xi}^a\right]-2z^az^b\,,
\end{align}
where $z$ can be an arbitrary phase-space vector $z\in V$, while $G$ is a positive-definite symmetric bilinear form such that $J=G\,\Omega^{-1}$ has the property that all the eigenvalues of $-J^2$ are larger than one. In particular, the Gaussian state $\sigma_{G,z}$ is a pure state if and only if $J^2=-\id$, in which case $(\Omega,G,J)$ form a so-called Kähler structure~\cite{hackl2018aspects,hackl2020geometry,hackl2020bosonic}. For the sake of a simpler notation, we omit the displacement vector when it is zero, \ie we define $\sigma_G = \sigma_{G,0}$

All the quantum states with finite average energy\footnote{One requires that the expectation value $\braket{\psi|\hat{H}|\psi}$ is finite for a quadratic Hamiltonian $\hat{H}=\frac{1}{2}\sum_{a,b}h_{ab}\hat{\xi}^a\hat{\xi}^b$ with symmetric, positive definite bilinear form $h_{ab}>0$. If this condition is satisfied for one choice of $h_{ab}$, it is satisfied for all and equivalent to requiring that all entries of the covariance matrix $G^{ab}$ are finite.}, which include all the quantum states that can be generated in physical experiments, have a well-defined covariance matrix. However, one could formally construct states for which certain entries of $G$ diverge\footnote{A simple example is the pure state $\ket{\psi}=\frac{\sqrt{6}}{\pi}\sum^\infty_{n=1}\frac{1}{n}\ket{n}$ of a single bosonic degree of freedom written in the number operator basis of a harmonic oscillator, for which the harmonic oscillator energy of $\hat{H}=\tfrac{E_0}{2}(\hat{q}^2+\hat{p}^2)=E_0(\hat{n}+\frac{1}{2})$ diverges due to $\braket{\psi|\hat{H}|\psi}=\frac{6E_0}{\pi^2}\sum^\infty_{n=1}\frac{2n+1}{2n^2}=\infty$.}.

The von Neumann entropy and the Rényi entropy of order $2$ of a Gaussian state take particularly simple forms when written in terms of $J$, namely
\begin{align}
    S(A)(\sigma_{G,z})&=\mathrm{tr}\left[\frac{\id+\ii J}{2}\ln\left|\frac{\id+\ii J}{2}\right|\right]\,,\nonumber\\
    S_2(A)(\sigma_{G,z})&=\frac{1}{2}\ln\det(\ii J)\,,
\end{align}
where the Rényi entropy of order $2$ of a generic state $\rho$ of $A$ is
\begin{equation}
    S_2(A)(\rho) = -\ln\mathrm{Tr}\,\rho^2\,,
\end{equation}
and provides a lower bound to the von Neumann entropy, \ie for any state $\rho$ of $A$,
\begin{equation}
    S_2(A)(\rho) \le S(A)(\rho)\,.
\end{equation}

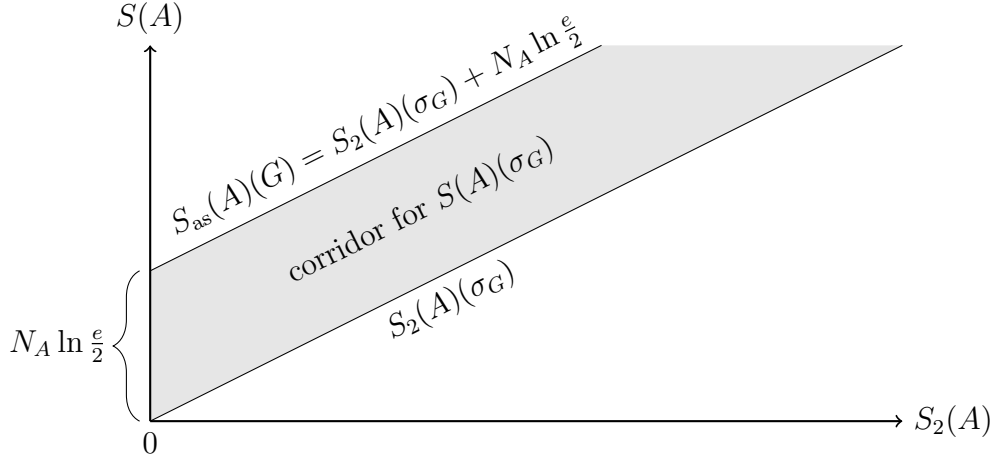
\begin{figure}[t]
    \centering
    \begin{tikzpicture}
    \draw (0,0) -- (10,5);
    \draw (0,2) -- (6,5);
    \fill[black,opacity=.1] (0,2) -- (0,0) -- (10,5) -- (6,5) -- cycle;
    \draw[->,thick] (0,0) -- (10,0) node[right]{$S_2(A)$};
    \draw[->,thick] (0,0) node[below]{$0$} -- (0,5) node[above]{$S(A)$};
    
    \draw [decorate,decoration={brace,amplitude=10pt},xshift=-4pt,yshift=0pt] (0,0) -- node[left,xshift=-3mm]{$N_A\ln\frac{e}{2}$} (0,2);
    
    \draw (3.6,2.8) node[rotate=26.5650512]{corridor for $S(A)(\sigma_G)$};
    \draw (4,1.6) node[rotate=26.5650512]{$S_2(A)(\sigma_{G})$};
    \draw (3,3.9) node[rotate=26.5650512]{$S_{\mathrm{as}}(A)(G)=S_2(A)(\sigma_G)+N_A\ln\frac{e}{2}$};
    
    \end{tikzpicture}
    \caption{\emph{Bounding corridor for von Neumann entropy of Gaussian states.} We show how the von Neumann entropy $S(A)(\rho)$ of a Gaussian state $\sigma_G$ in a system $A$ is bounded from below by the Rényi entropy $S_2(A)$ and from above by the asymptotic entropy $S_{\mathrm{as}}(A)(G)$.}
    \label{fig:bounding-corridor}
\end{figure}

For Gaussian states, the difference between the von Neumann entropy and the Rényi entropy of order $2$ is upper bounded by the number of modes:
\begin{prop}\label{prop:bounding-corridor}
The von Neumann entropy and the Rényi entropy of order $2$ of any mixed Gaussian state $\sigma$ of the bosonic quantum system $A$ with $N$ modes satisfy the bounds
\begin{align}
    S_2(A)(\sigma)\leq S(A)(\sigma)\leq S_2(A)(\sigma)+N\ln\frac{e}{2}\,.
\end{align}
\end{prop}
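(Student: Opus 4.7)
The plan is to reduce the $N$-mode claim to $N$ copies of a single-mode bound via Williamson's symplectic normal form. Both $S(A)$ and $S_2(A)$ are invariant under Gaussian unitaries (in particular, under the unitary implementing any symplectic transformation of the quadratures together with any displacement), so we may replace $\sigma$ by its normal form $\sigma_{\nu_1}\otimes\cdots\otimes\sigma_{\nu_N}$, a tensor product of single-mode thermal states indexed by the symplectic eigenvalues $\nu_1,\ldots,\nu_N\geq 1$ of $G$. Since both entropies are additive on tensor products and $N\ln(e/2)=\sum_{i=1}^N\ln(e/2)$, the proposition follows from the per-mode inequality
\begin{equation*}
    S(\sigma_\nu) \leq S_2(\sigma_\nu) + \ln\frac{e}{2},\qquad \nu\geq 1.
\end{equation*}

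For a single-mode thermal state with symplectic eigenvalue $\nu$, the Fock-basis eigenvalues form the geometric sequence $p_n=\tfrac{2}{\nu+1}\bigl(\tfrac{\nu-1}{\nu+1}\bigr)^n$. A short direct computation yields the standard closed forms
\begin{equation*}
    S(\sigma_\nu) = \frac{\nu+1}{2}\ln\frac{\nu+1}{2} - \frac{\nu-1}{2}\ln\frac{\nu-1}{2},\qquad S_2(\sigma_\nu) = \ln\nu,
\end{equation*}
so the task becomes showing that $h(\nu):=S(\sigma_\nu)-S_2(\sigma_\nu)$ satisfies $h(\nu)\leq\ln(e/2)$ on $[1,\infty)$.

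I would establish this via monotonicity and an asymptotic. Differentiating gives $h'(\nu)=\tfrac{1}{2}\ln\tfrac{\nu+1}{\nu-1}-\tfrac{1}{\nu}=\mathrm{arctanh}(1/\nu)-1/\nu$, which is nonnegative on $(1,\infty)$ by the Taylor series $\mathrm{arctanh}(x)=x+x^3/3+x^5/5+\cdots$. Since $h(1)=0$ (interpreting $0\ln 0=0$, consistent with the fact that $\nu=1$ corresponds to a pure state with vanishing entropy of both kinds), we obtain $h(\nu)\leq\lim_{\nu\to\infty}h(\nu)$. Expanding $\tfrac{\nu\pm 1}{2}\ln\tfrac{\nu\pm 1}{2}=\tfrac{\nu\pm 1}{2}\ln(\nu/2)\pm\tfrac{1}{2}+O(1/\nu)$ and subtracting gives $S(\sigma_\nu)-\ln\nu=\ln(e/2)+O(1/\nu)$, so $h(\nu)\leq\ln(e/2)$, with saturation only in the highly mixed limit. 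The lower bound $S_2(A)(\sigma)\leq S(A)(\sigma)$ is a standard fact valid for arbitrary quantum states.

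There is no substantial obstacle: once one recognizes that the statement is additive and reduces to a single-mode scalar inequality, everything collapses to explicit calculus. The only mildly delicate point is the smoothness of $h$ at the endpoint $\nu=1$, which one handles by taking the limit in the explicit expressions. A byproduct of the argument is that the constant $\ln(e/2)$ per mode is optimal, since the bound is attained asymptotically as $\nu\to\infty$.
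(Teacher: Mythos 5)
Your proof is correct: the Williamson reduction to single-mode thermal states, the closed forms $S(\sigma_\nu)=\tfrac{\nu+1}{2}\ln\tfrac{\nu+1}{2}-\tfrac{\nu-1}{2}\ln\tfrac{\nu-1}{2}$ and $S_2(\sigma_\nu)=\ln\nu$, and the monotonicity-plus-asymptotics argument for $h(\nu)\le\ln\tfrac{e}{2}$ all check out. The paper does not supply its own argument but defers to Section 6.1 of Bianchi \emph{et al.}, where the standard proof is essentially the one you give, so you have simply filled in the details the paper delegates to the literature.
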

\begin{proof}
This result is well-known and appeared in various forms in the literature. Among other places, it is shown in \cite[Section 6.1]{bianchi2018linear}.
\end{proof}

\autoref{prop:bounding-corridor} allows to replace the von Neumann entropy by the Rényi entropy of order $2$ by only making a finite error independent of the quantum state (though increasing for larger systems). We therefore define the \emph{asymptotic von Neumann entropy}
\begin{align}
    S_{\mathrm{as}}(A)(G)=S_2(A)(\sigma_G)+N_A \ln\frac{e}{2}\,,\label{eq:def-Sas}
\end{align}
which gives for each covariance matrix $G$ the corresponding upper bound from~\autoref{prop:bounding-corridor}. This bound becomes exact when all symplectic eigenvalues of $G$ are large \cite[Lemma 9]{de2021generalized}.

Using the previous result, we will often replace the von Neumann entropy of a Gaussian state by its Rényi entropy of order $2$, whose asymptotic growth can be more easily analyzed by the following geometric interpretation:

\begin{prop}\label{prop:geometric-Rényi}
The Rényi entropy of order $2$ of the Gaussian state $\sigma_{G,z}$ of $A$ is equal to the logarithm of the metric volume with respect to $G$ of any region $\mathcal{V}\subset V^*$ with unit symplectic volume, \ie
\begin{align}
    S_2(A)(\sigma)=\ln\mathrm{Vol}_G(\mathcal{V})\,.\label{eq:R-geoemtry1}
\end{align}
\end{prop}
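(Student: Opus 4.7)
My plan is to work in a Darboux basis, where the claim reduces to elementary linear algebra together with the observation that any translation-invariant top form on a linear space is a constant multiple of Lebesgue measure.

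First I would note that both sides are invariant under the displacement $z$: the Rényi entropy because conjugation by the Weyl unitary implementing a displacement preserves $\mathrm{Tr}\,\rho^2$, and the right-hand side trivially. So it suffices to treat the case $z=0$, \ie $\sigma = \sigma_G$. Then I fix a Darboux basis of $V^*$ in which $\Omega$ is represented by the standard matrix $\Omega_{2N}$ with $\det\Omega_{2N}=1$. In these coordinates, the symplectic volume form $\Omega^N/N!$ coincides with Lebesgue measure $d^{2N}\xi$, while the Riemannian volume form induced by the positive-definite bilinear form $G$ is $\sqrt{\det G}\,d^{2N}\xi$. Both are globally constant top forms on the linear space $V^*$, hence proportional with a region-independent ratio, giving
\begin{equation}
    \mathrm{Vol}_G(\mathcal{V}) \;=\; \sqrt{\det G}\;\mathrm{Vol}_\Omega(\mathcal{V}) \;=\; \sqrt{\det G}
\end{equation}
for any region $\mathcal{V}\subset V^*$ of unit symplectic volume. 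This already makes the independence of the right-hand side from the choice and shape of $\mathcal{V}$ transparent.

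Next I would match $S_2$ to $\tfrac{1}{2}\ln\det G$. The cleanest route is Williamson's normal form: pick a Darboux basis in which $G$ is block-diagonal with blocks $\nu_i\,\id_2$, where the $\nu_i\geq 1$ are the symplectic eigenvalues. In this basis $J=G\,\Omega^{-1}$ is block-diagonal with $2\times 2$ blocks whose eigenvalues are $\pm\ii\nu_i$, and the per-mode Rényi-2 entropy is $\ln\nu_i$, so
\begin{equation}
    S_2(A)(\sigma_G) \;=\; \sum_{i=1}^N \ln\nu_i \;=\; \tfrac{1}{2}\ln\prod_{i=1}^N\nu_i^2 \;=\; \tfrac{1}{2}\ln\det G\,,
\end{equation}
in agreement with $S_2=\tfrac12\ln\det(\ii J)$ once the factor $\ii^{2N}=(-1)^N$ is absorbed into an absolute value. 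Combining the two paragraphs then yields $S_2(A)(\sigma)=\ln\sqrt{\det G}=\ln\mathrm{Vol}_G(\mathcal{V})$, which is the claim.

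I do not anticipate a serious obstacle. The only point requiring a bit of bookkeeping is the factor $\ii^{2N}$ in $\det(\ii J)$, which Williamson's normal form dissolves by pairing the two conjugate imaginary eigenvalues of $J$ per mode into the positive product $\prod\nu_i^2=\det G$. Everything else is routine linear algebra, and the shape-independence of $\mathrm{Vol}_G(\mathcal{V})$ at fixed symplectic volume is ultimately the statement that on a linear symplectic space the ratio of two constant top forms is a single global constant.
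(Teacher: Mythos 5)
Your proof is correct and arrives at the same core identity as the paper, namely $S_2(A)(\sigma_G)=\tfrac{1}{2}\ln\det G=\ln\sqrt{\det G}=\ln\mathrm{Vol}_G(\mathcal{V})$ in a Darboux basis with $\det\Omega_{2N}=1$; the differences are only in how the two halves are justified. For the volume half, the paper identifies $\det G$ with the Gram determinant of the Darboux basis vectors, which literally only treats the spanned parallelepiped, whereas your observation that the metric volume form is $\sqrt{\det G}$ times the symplectic (Lebesgue) volume form handles an arbitrary region of unit symplectic volume in one stroke --- this is a slight improvement in generality that matches the wording of the statement. For the entropy half, the paper simply quotes $S_2=\tfrac{1}{2}\ln\det\lvert\ii J\rvert$ with $J=G\,\Omega^{-1}$ and uses $\det\Omega=1$, while you route through Williamson's normal form to get $S_2=\sum_i\ln\nu_i=\tfrac{1}{2}\ln\det G$; the two are equivalent (Williamson just diagonalizes the same determinant), and your version has the small merit of dissolving the $\ii^{2N}$ bookkeeping explicitly. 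Your preliminary reduction to $z=0$ is also a point the paper glosses over but is worth stating.
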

\begin{proof}
The simple proof can be found in~\cite[Section 6.2]{bianchi2018linear} and gives the Rényi entropy of order $2$ a concrete geometric interpretation. We recall that the Rényi entropy of order $2$ can be written as determinant~\cite{bianchi2015entanglement}
\begin{align}
    S_2(\sigma)=\frac{1}{2}\ln\det|\ii J|\,,
\end{align}
where $J=-G\Omega^{-1}$. We can always choose a basis $(v^1,\dots,v^{2N})$ of $V^*$ where $\Omega$ has the standard form~\eqref{eq:Omega-sta}, such that $\det{\Omega}=1$. Note that this implies that the parallelepiped spanned by the chosen basis vectors has unit volume with respect to the symplectic volume form. The matrix entries $G_{ij}=G(v_i,v_j)$ can be understood as the inner products $\braket{v_i,v_j}_G$ defined by $G$, which is also known as Gram matrix. It is well-known that the determinant of a Gram matrix
\begin{align}
    \det G=\det\begin{pmatrix}
    \braket{v_1,v_1}_G & \cdots & \braket{v_1,v_{2N}}_G\\
    \vdots& \ddots & \vdots\\
    \braket{v_{2N},v_1}_G & \cdots & \braket{v_{2N},v_{2N}}_G
    \end{pmatrix}=\left(\mathrm{Vol}_G(\mathcal{V})\right)^2
\end{align}
equals the square of the geometric volume of the respective parallelepiped (spanned by the chosen basis, measured with respect to the inner product $G$). Therefore, the prefactor of $\frac{1}{2}$ will cancel the square and we arrive at~\eqref{eq:R-geoemtry1}.
\end{proof}

\begin{figure}[t]
\begin{center}
	\begin{tikzpicture}
	\draw[->,thick] (0,0) -- (3,0) node[below]{};
	\draw (2.5,-.5) node[below,gray]{subsystem};
	\draw[->,thick] (0,0) -- (0,2) ;
	\draw (1.2,0) node[above]{phase space};
	\draw[->,thick] (0,0) -- (-1.3,-1.3);
	\filldraw[thick,fill=gray,opacity=0.4] (0,0) -- (2,0) -- (1,-1) -- (-1,-1) -- cycle;
	\draw (0.5,-0.5) node[red]{$\mathcal{V}_A$};
	
	\draw (4,1.5) node[right]{symplectic volume $\mathrm{Vol}_{\Omega}(\mathcal{V}_A)=1$};
	\draw (4,0.5) node[right]{metric volume $\mathrm{Vol}_{G}(\mathcal{V}_A)\geq1$};
	\draw (4,-.5) node[right]{Rényi entropy $S_2(A)=\ln\mathrm{Vol}_{G}(\mathcal{V}_A)$};
	\end{tikzpicture}
\end{center}
	\caption{\emph{Geometric interpretation of Rényi entropy.} We can interpret the Rényi entropy (of order 2) as the logarithm of the metric volume of a parallelepiped on the subspace $A^*\subset V^*$, whose symplectic volume is equal to $1$. The metric volume is calculate by restrcting the covariance matrix $G$ of the given state to the subspace $A$ that contains $\mathcal{V}_A$.
	}
\label{fig:geometric-renyi}
\end{figure}

After this review of Gaussian states and their entropies, we will now turn to the dynamics under quadratic Hamiltonians. Given such a Hamiltonian
\begin{align}
    \hat{H}(t)=\frac{1}{2}\sum_{a,b=1}^Nh_{ab}(t)\hat{\xi}^a\hat{\xi}^b+\sum_{a=1}^Nf_a(t)\hat{\xi}^b\,,\label{eq:quadratic}
\end{align}
we define the symplectic generator $K(t)=\Omega\,h(t)$ that gives rise to the time-dependent symplectic group element $M(t)$ written as time-ordered exponential
\begin{align}
    M(t)=\mathcal{T} \exp\int^t_{0}K(t')\,dt'\,.
\end{align}

For every time-dependent symplectic transformation $M(t)$, we define the limiting matrix\footnote{At this point, we perform all computations in a fixed basis, so that we can represent $M(t)$ as matrix. Otherwise, the limiting matrix will explicitly depend on an auxiliary inner product $G$, which we chose to be the identity in our basis.}
\begin{align}\label{eq:defL}
    L(M) = \lim_{t\to\infty}\frac{\ln\left(M(t)\,{M(t)}^\intercal\right)}{2t}\,,
\end{align}
provided this limit exists. The eigenvectors $\ell$ of $L(M)$ are elements of the dual phase space $V^*$ and we can always choose an orthonormal eigenbasis
\begin{align}
    \mathcal{D}=(\ell^1,\dots,\ell^{2N})\,,
\end{align}
whose elements we sort such that the associated eigenvalues $\lambda_i$ satisfy $\lambda_1\geq\dots\geq\lambda_{2N}$. The eigenvalues $\lambda_i$ are also called \emph{Lyapunov exponents} of the respective basis vector $\ell^i$ (see~\cite[Appendix A.2]{bianchi2018linear}).

The basis $\mathcal{D}$ gives rise to the sequence of subspaces
\begin{align}
    V^*_{2N}\subset V^*_{2N-1}\subset\dots \subset V^*_{2}\subset V^*_{1}=V^*
\end{align}
with $V^*_k=\mathrm{span}(\ell^{1},\dots \ell^{k})$, which characterizes the long-time behavior of elements on. For any $\ell\in V^*$, we define its Lyapunov exponents as
\begin{align}
    \lambda_\ell=\lim_{t\to\infty}\ln\lVert {M(t)}^\intercal\ell\rVert=\lambda_k\quad\text{for}\quad \ell\in V^*_k\setminus V^*_{k+1}\,.\label{eq:def-lyapunov}
\end{align}
There is the notion of a \emph{regular} Hamiltonian system (see~\cite[Appendix A.3]{bianchi2018linear}), which is characterized by the property that the symplectic flow $M(t)$ has well-defined Lyapunov exponents that appear in conjugate pairs, such that $\lambda_{k}=-\lambda_{2N+1-k}$. This is a rather natural assumption, as most examples violating these assumptions are rather unphysical (\eg due to above-exponential growth) and in particular, any time-independent quadratic Hamiltonian is \emph{regular}.

A crucial question will be how the metric volume (with respect to a positive-definite, bilinar form $G$)
of a parallelepiped $\mathcal{V}\subset V^*$ behaves when we evolve it with the symplectic transformation $M(t)$, \ie what is $\ln\mathrm{Vol}_G({M(t)}^\intercal\mathcal{V})$? As it turns out, its leading order depends only on the subspace of $V^*$ spanned by $\mathcal{V}$, and is independent of $G$ and of the shape of $\mathcal{V}$.

\subsection{Entanglement growth for Gaussian states}\label{sec:entG}
Let $A$ and $B$ be bosonic quantum systems with $N_A$, $N_B$ modes, phase spaces $V_A$, $V_B$, symplectic forms $\Omega_A$, $\Omega_B$ and Hilbert spaces $\mathcal{H}_A$, $\mathcal{H}_B$, respectively.
Their union is the bipartite bosonic quantum system $AB$ with $N=N_A+N_B$ modes, phase space $V=V_A\oplus V_B$, symplectic form $\Omega=\Omega_A\oplus\Omega_B$ and Hilbert space $\mathcal{H}=\mathcal{H}_A\otimes\mathcal{H}_B$.
We say that $A$ and $B$ are subsystems of $AB$.
Given a Gaussian state $\sigma_{G,z}$ of $AB$, its marginal state on $A$ is the Gaussian state $\sigma_{G_A,z_A}$, where $G_A$ is the restriction of $G$ to $V_A^*$ and $z_A$ is the projection of $z$ onto $V_A$.

For a given time-dependent symplectic transformation $M(t)$, we associate to the subsystem $A$ the exponent
\begin{align}
    \Lambda_A = \lim_{t\to\infty}\frac{\ln\mathrm{Vol}_G({M(t)}^\intercal\mathcal{V})}{t}\,,\label{eq:Lambda_A}
\end{align}
where $\mathcal{V}\subset V_A^*$ is any parallelepiped whose span is equal to $V_A^*$ and $G$ is any positive-definite bilinear form on $V^*$.
The subsystem exponent does not depend on $G$ and can be computed explicitly from the Lyapunov basis $\mathcal{D}$ and the Lyapunov exponents $\lambda_k$, as follows.

\begin{figure}[t]
\begin{center}
	\begin{tikzpicture}
	\draw[->,thick] (0,0) -- (3,0) node[below]{};
	\draw (2.5,-.5) node[below,gray]{subsystem};
	\draw[->,thick] (0,0) -- (0,2) ;
	\draw (1.2,0) node[above]{phase space};
	\draw[->,thick] (0,0) -- (-1.3,-1.3);
	\filldraw[thick,fill=gray,opacity=0.4] (0,0) -- (2,0) -- (1,-1) -- (-1,-1) -- cycle;
	
	\draw[->,thick] (6,0) -- (9,0);
	\draw[->,thick] (6,0) -- (6,2);
	\draw[->,thick] (6,0) -- (4.7,-1.3);
	\filldraw[thick,fill=gray,opacity=0.4] (6,0) -- (8,2) node[above,black,opacity=1,xshift=4mm]{$\color{red}e^{\lambda_{i_1} t}\color{black}$} -- (10.2,1.2) -- (8.3,-1) node[right,black,opacity=1]{$\color{red}e^{\lambda_{i_2} t}\color{black}$} -- cycle;
	\draw (4,.7) node[scale=1.5]{$\Longrightarrow$};
	\draw (4,.9) node[above,red]{Time evolution};
	\draw[thick,red,->]  (6,0) -- (8,2);
	\draw[thick,red,->]  (6,0) -- (8.3,-1);
	
	\draw (0.5,-0.5) node[red]{$\mathcal{V}$};
	\draw (8,.6) node[red]{${M(t)}^\intercal\mathcal{V}$};
	
	\draw (7.1,1.5) node[red, rotate=45]{${M(t)}^\intercal\ell^{i_1}$};
	\draw (7.2,-.9) node[red, rotate=-22]{${M(t)}^\intercal\ell^{i_2}$};
	
	\draw (4,3.4) node{$\displaystyle\Lambda_A=\lim_{t\to\infty}\log \mathrm{Vol}_G(\color{red}{M(t)}^\intercal\mathcal{V}\color{black})= \color{red}\sum^{2N_B}_{k=1}\lambda_{i_k}\color{black}$};
	\draw[opacity=0] (-2.2,4.55) rectangle (11.3,-1.6);
	\end{tikzpicture}
\end{center}
	\caption{\emph{Subsystem exponent due to phase space stretching.} The subsystem exponent~\eqref{eq:Lambda_A} describes the exponential volume growth of an initial parallelepiped $\mathcal{V}$ whose span is equal to $V_B^*$ under the symplectic time evolution ${M(t)}^\intercal$. This figure is based on the respective figure in~\cite{bianchi2018linear}.}
\label{fig:phase space-strething}
\end{figure}
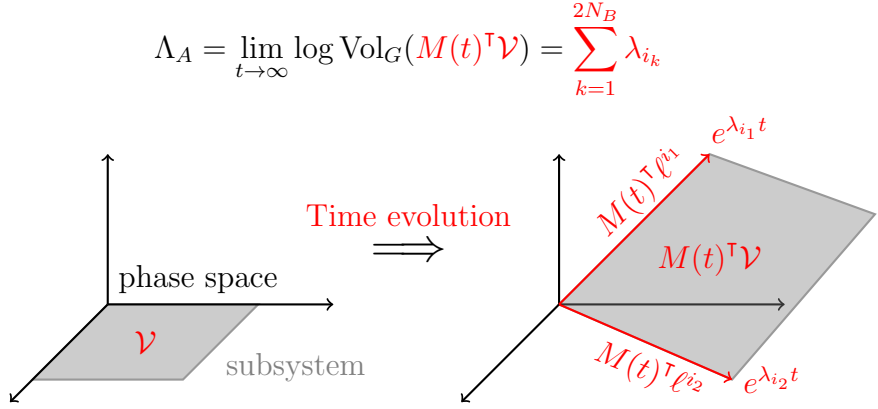

\begin{prop}\label{prop:subsystem-exponent}
Given a bipartite bosonic quantum system $AB$ and a regular Hamiltonian system characterized by $M(t)$ with Lyapunov spectrum $(\lambda_1,\dots,\lambda_{2N})$ and Lyapunov basis $\mathcal{D}=(\ell^1,\dots, \ell^{2N})$, the subsystem exponent $\Lambda_A$ of $A$ can be computed as follows:
	\begin{enumerate}
		\item Choose a Darboux basis $\mathcal{D}_A=(\theta^1,\dots,\theta^{2N_A})$ of $V_A^*$, \ie the restricted symplectic form $\Omega_A$ takes the form of~\eqref{eq:Omega-sta}.
		\item Compute the unique transformation matrix $F$ that expresses $\mathcal{D}_A$ in terms of the Lyapunov basis $\mathcal{D}=(\ell^1,\dots,\ell^{2N})$:
		\begin{equation}
		\left(\begin{array}{c}
		\theta^1\\
		\vdots\\
		\theta^{2N_A}
		\end{array}\right)=\left(\begin{array}{ccc}
		\smash{\fbox{\color{black}\rule[-37pt]{0pt}{1pt}$\,\,F^1_1\,\,$}} & \cdots & \smash{\fbox{\color{black}\rule[-37pt]{0pt}{1pt}$\,F_{2N}^1$}}\\
		\vdots & \ddots & \vdots\\
		\myunderbrace{\mystrut{1.5ex}F_1^{2N_A}}{\vec{F}_1} & \cdots & \myunderbrace{\mystrut{1.5ex}F_{2N}^{2N_A}}{\vec{F}_{2N}}
		\end{array}\right)\left(\begin{array}{c}
		\ell^1\\
		\vdots\\
		\ell^{2N}
		\end{array}\right)\color{white}{\begin{array}{c}
			.\\ \\ \\ \\ ,
			\end{array}}\color{black}
			\label{eq:Tv}
		\end{equation}
    We refer to the $2N$ columns of $F$ as $\vec{F}_i$.
	\item Find the first $2N_A$ linearly independent \footnote{Here, we mean that $\vec{F}_i$ cannot be expressed as a linear combination of the vectors $(\vec{F}_1,\dots,\vec{F}_{i-1})$ standing to the left in the matrix $F$.} columns $\vec{t}_i$ of $T$ which we can label by $\vec{t}_{i_k}$ with $k$ ranging from $1$ to $2N_A$. The result is a map $k\mapsto i_k \in (1,\dots,2N)$ with $i_{k+1}>i_k$.
	\end{enumerate}
The subsystem exponent is then given by the sum $\Lambda_A=\sum^{2N_A}_{k=1}\lambda_{i_k}$ over the $2N_A$ Lyapunov exponents $\lambda_{i_k}$, where the index $i_k$ is defined above.
\begin{rem}
For almost all subsystems (except a measure zero set), the subsystem exponent is given by the sum of the largest $2N_A$ Lyapunov exponents, \ie $\Lambda_A=\sum^{2N_A}_{i=1}\lambda_i$.
\end{rem}
\end{prop}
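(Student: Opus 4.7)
The plan is to evaluate \eqref{eq:Lambda_A} by decomposing $\mathcal V$ in the Lyapunov basis and identifying the dominant exponential growth rate. I first observe that $\Lambda_A$, if it exists, is independent of both $G$ and the specific parallelepiped $\mathcal V$ with $\mathrm{span}\,\mathcal V = V_A^*$: any two positive-definite bilinear forms on the finite-dimensional space $V^*$ are norm-equivalent, and any two parallelepipeds with the same span differ by an invertible linear transformation of $V_A^*$. Both effects contribute only a $t$-independent multiplicative constant to the volume, which drops out of $(1/t)\ln(\cdot)$. I may therefore take $G$ to be the Euclidean inner product in the Lyapunov basis $\mathcal D$, and $\mathcal V$ to be the parallelepiped spanned by the Darboux basis $(\theta^1,\dots,\theta^{2N_A})$ of $V_A^*$.

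Next, I would replace $F$ with its reduced row echelon form $\tilde F$, which corresponds to a $t$-independent change of basis of $V_A^*$ and preserves the growth rate. Standard linear algebra identifies the pivot columns of $\tilde F$ with the columns $\vec F_{i_1},\dots,\vec F_{i_{2N_A}}$ produced by the greedy selection in the statement (the matroid of linearly independent column subsets, with the index-ordering as weight, is optimized by precisely that greedy). The resulting basis vectors then take the form
\begin{equation*}
    \tilde\theta^k = \ell^{i_k} + \sum_{j>i_k,\; j\notin\{i_1,\dots,i_{2N_A}\}} c^k_j\,\ell^j,
\end{equation*}
so each $\tilde\theta^k$ lies in $V^*_{i_k}\setminus V^*_{i_k+1}$ and, by \eqref{eq:def-lyapunov}, carries the individual Lyapunov exponent $\lambda_{\tilde\theta^k}=\lambda_{i_k}$.

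Since $\mathrm{Vol}_G({M(t)}^\intercal\tilde{\mathcal V})$ equals the norm of the wedge product $\bigwedge_k {M(t)}^\intercal\tilde\theta^k$, I would expand this wedge in the induced Lyapunov basis of $\bigwedge^{2N_A}V^*$. The expansion contains the leading term $\ell^{i_1}\wedge\cdots\wedge\ell^{i_{2N_A}}$ with unit coefficient, plus subdominant terms indexed by tuples $(m_1,\dots,m_{2N_A})$ with $m_k\ge i_k$ and at least one strict inequality. Because the Lyapunov spectrum satisfies $\lambda_1\ge\cdots\ge\lambda_{2N}$, the constraint $\sum_k m_k > \sum_k i_k$ forces $\sum_k\lambda_{m_k}\le\sum_k\lambda_{i_k}$, so the leading term sets the exponential rate and yields $\Lambda_A=\sum_{k=1}^{2N_A}\lambda_{i_k}$.

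The main obstacle is the last step: promoting the term-by-term rates to a rate for the norm of the sum, especially when repeated Lyapunov eigenvalues cause several terms to share the maximal exponent. Regularity of the Hamiltonian is essential here, since it upgrades the spectral convergence $\tfrac{1}{2t}\ln M(t)M(t)^\intercal\to L(M)$ into an analogous convergence on the exterior power $\bigwedge^{2N_A}V^*$, whose Lyapunov spectrum is exactly $\{\sum_{i\in S}\lambda_i\}_{|S|=2N_A}$. Because the leading wedge component $\ell^{i_1}\wedge\cdots\wedge\ell^{i_{2N_A}}$ enters $\tilde\theta^1\wedge\cdots\wedge\tilde\theta^{2N_A}$ with coefficient $1$ and lies in the Lyapunov direction of maximal exponent appearing in that decomposition, it cannot be canceled by the subdominant or equally-growing terms, and the claimed asymptotics follows.
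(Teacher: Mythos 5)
Your proposal is correct and follows essentially the same route as the proof the paper relies on (Theorem 3 of Bianchi \emph{et al.}, which the proposition's ``proof'' cites rather than reproduces): reduce to the wedge product $\bigwedge_k {M(t)}^\intercal\tilde\theta^k$, identify the greedy left-to-right column selection with the pivot columns of the row-reduced $F$, and invoke Lyapunov regularity to pass the growth rate to the exterior power and rule out cancellation of the maximal-exponent component $\ell^{i_1}\wedge\cdots\wedge\ell^{i_{2N_A}}$. The only caveat is notational: the paper writes the filtration as $V_k^*=\mathrm{span}(\ell^1,\dots,\ell^k)$ with the inclusions reversed, and your use of $V_{i_k}^*\setminus V_{i_k+1}^*$ silently adopts the corrected convention $V_k^*=\mathrm{span}(\ell^k,\dots,\ell^{2N})$ --- worth flagging, but not a gap.
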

\begin{proof}
This result was proven in the context of entanglement growth as \cite[Theorem 3 (Subsystem exponent)]{bianchi2018linear}. The key idea is that $M(t)$ will stretch any $2N_A$-dimensional volume region (including any parallelepiped) dominantly into those $2N_A$ directions with the largest Lyapunov exponents, provided that there are directions in the subspace $V_A^*$ that are linearly dependent on those stretching directions. Unsurprisingly, the generic situation (\ie applicable to all subsystems except for a subset of measure zero) the subspace $V_A^*$ will have overlap with all stretching directions, so that the $2N_A$ largest ones dominate and we get $\Lambda_A=\sum^{2N_A}_{i=1}\lambda_i$, as explained in \cite[Theorem 4 (Subsystem exponent -- generic subsystem)]{bianchi2018linear}.
\end{proof}

The main result of~\cite{bianchi2018linear} was then derived by combining the reviewed propositions and theorems to understand the large time asymptotics of the entanglement entropy for arbitrary bosonic Gaussian states.
\begin{prop}\label{prop:Gaussian}
Given a quadratic time-dependent Hamiltonian $\hat{H}(t)$ and a subsystem $A$ with subsystem exponent $\Lambda_A$, the long-time behavior of the entanglement entropy of the subsystem is
\begin{equation}
    S(A)(t) = \Lambda_A\,t + o(t)
\end{equation}
for all initial Gaussian states. Moreover, this asymptotics also provides an upper bound for non-Gaussian initial states with finite covariance matrix.
\end{prop}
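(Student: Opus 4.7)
The plan is to chain together the three preceding propositions with the elementary observation that Gaussian states are preserved under evolution by quadratic Hamiltonians. Concretely, if $\rho_0 = \sigma_{G_0,z_0}$ is Gaussian, then $\rho(t) = \mathcal{U}_{M(t)}(\rho_0)$ is again Gaussian with displacement $M(t) z_0$ and covariance matrix $G(t) = M(t)\, G_0\, M(t)^\intercal$. Since the von Neumann entropy does not depend on the displacement, and since partial tracing over $B$ corresponds, at the level of covariance matrices, to restricting the bilinear form $G(t)$ to $V_A^*$, the marginal $\rho_A(t)$ is the Gaussian state with covariance matrix $G_A(t) = G(t)|_{V_A^*}$. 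The first step is therefore just to rewrite $S(A)(t) = S(\sigma_{G_A(t)})$.

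Next, I would apply \autoref{prop:bounding-corridor} to replace the von Neumann entropy by the Rényi-2 entropy at the cost of a bounded, time-independent error $N_A \ln(e/2)$, which is absorbed into the $o(t)$ term. Then \autoref{prop:geometric-Rényi} converts $S_2(A)(\sigma_{G_A(t)})$ into $\ln \mathrm{Vol}_{G_A(t)}(\mathcal{V}_A)$ for any parallelepiped $\mathcal{V}_A \subset V_A^*$ of unit symplectic volume (e.g.\ the one spanned by the Darboux basis $\mathcal{D}_A$). The key algebraic identity to record is that, for any $v,w \in V_A^*$,
\begin{equation}
G_A(t)(v,w) = G_0\bigl(M(t)^\intercal v,\, M(t)^\intercal w\bigr),
\end{equation}
so the Gram determinant yields $\mathrm{Vol}_{G_A(t)}(\mathcal{V}_A) = \mathrm{Vol}_{G_0}(M(t)^\intercal \mathcal{V}_A)$. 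Invoking \autoref{prop:subsystem-exponent} with the bilinear form $G_0$ and the parallelepiped $\mathcal{V}_A$ whose span equals $V_A^*$ gives $\ln \mathrm{Vol}_{G_0}(M(t)^\intercal \mathcal{V}_A) = \Lambda_A\, t + o(t)$, completing the Gaussian case.

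For the upper-bound statement on arbitrary initial states $\rho_0$ with finite covariance matrix $G_0$, I would use the standard fact that among all quantum states with a fixed covariance matrix $G$, the centered Gaussian state $\sigma_G$ maximizes the von Neumann entropy. Applied to the marginal $\rho_A(t)$, whose covariance matrix is $G_A(t) = [M(t) G_0 M(t)^\intercal]|_{V_A^*}$, this yields $S(A)(\rho(t)) \leq S(A)(\sigma_{G_A(t)})$, and the preceding Gaussian argument shows the right-hand side is $\Lambda_A\, t + o(t)$.

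The main obstacle, to the extent that there is one, lies in the verification that the definition \eqref{eq:Lambda_A} of $\Lambda_A$ is robust: it is phrased in terms of an \emph{arbitrary} positive-definite bilinear form $G_0$ on $V^*$ and an \emph{arbitrary} parallelepiped with span $V_A^*$, and \autoref{prop:subsystem-exponent} guarantees this independence. Once this is accepted, the argument reduces to bookkeeping. The only subtlety worth flagging is that \autoref{prop:bounding-corridor} requires a \emph{mixed} Gaussian state, so one should note that for a pure Gaussian initial state the marginal $\sigma_{G_A(t)}$ is generically mixed (with strictly growing symplectic eigenvalues), placing us safely in the regime where the corridor bound applies.
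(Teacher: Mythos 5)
Your proposal is correct and follows essentially the same route as the paper's proof: the bounding corridor of \autoref{prop:bounding-corridor}, the volume interpretation of \autoref{prop:geometric-Rényi} together with $G(t)=M(t)\,G_0\,{M(t)}^\intercal$, the subsystem exponent of \autoref{prop:subsystem-exponent}, and the maximum-entropy property of Gaussian states for the non-Gaussian upper bound. The extra details you supply (the explicit Gram identity and the remark that the marginal of a pure Gaussian state is generically mixed) are consistent with, and slightly more careful than, the paper's own sketch.
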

\begin{proof}
This result follows by combining \autoref{prop:bounding-corridor}, \autoref{prop:geometric-Rényi} and \autoref{prop:subsystem-exponent} and was proven in~\cite[Theorem 1]{bianchi2018linear}. The proof follows three steps:
\begin{enumerate}
    \item \autoref{prop:bounding-corridor} states that the entanglement entropy $S(A)(t)$ is bounded according to $S_2(A)(t)\leq S(A)(t)\leq S_2(A)(t)+N_A\ln\frac{e}{2}$. Provided that we can show that $S_2(A)(t) = \Lambda_A t + o(t)$ as $t\to\infty$, the same asymptotics will also apply to $S(A)(t)$.
\item \autoref{prop:geometric-Rényi} provides a geometric interpretation of the Rényi entropy of order $2$ in terms of the geometric volume of a parallelepiped $\mathcal{V}$ with unit symplectic volume. The covariance matrix of the initial state will evolve with time as $G(t)=M(t)\,G(0)\,{M(t)}^\intercal$. This allows us to rewrite the Rényi entropy of order $2$ as
\begin{align}
    S_2(A)(t)=\log\mathrm{Vol}_{G(t)}(\mathcal{V})=\log\mathrm{Vol}_{G(0)}({M(t)}^\intercal\mathcal{V})\,,
\end{align}
where we notice that it is the same to measure the volume of the initial parallelepiped $\mathcal{V}$ with respect to the time-dependent inner product $G(t)$ or to measure the volume of the time-dependent parallelepiped $\mathcal{V}(t)={M(t)}^\intercal\mathcal{V}$ with respect to the initial inner product $G(0)$. We then recognize the resulting growth rate as the subsystem exponent $\Lambda_A$ from~\eqref{eq:Lambda_A}.
\item \autoref{prop:subsystem-exponent} then shows how this subsystem exponent is calculated in practice and that it actually turns out to be independent of the initial covariance matrix $G(0)$.
\end{enumerate}
This result already provides an upper bound on the entanglement growth for \emph{arbitrary} pure initial states with finite covariance matrix (see also~\cite[Section 2.3]{bianchi2018linear}). This is due to the fact that the time evolution under a quadratic Hamiltonian changes the covariance matrix according to $G(t)=M(t)\,G(0)\,{M(t)}^\intercal$, \ie just as if the state were Gaussian. Moreover, it is well-known that among all the mixed states of the subsystem $A$ with covariance matrix $[G(t)]_A$, the Gaussian state has the maximal von Neumann entropy. As our proof shows that the associated entropy of Gaussian states grows at most with the rate $\Lambda_A$, this serves as an upper bound on the entanglement growth for all states.
\end{proof}

In summary, following~\cite{bianchi2018linear} we showed that the entanglement entropy grows linearly with rate $\Lambda_A$, which also serves as an upper bound for any non-Gaussian state with finite covariance matrix.

\subsection{Generalized strong subadditivity}
The main ingredient of the lower bound for the time scaling of the entanglement entropy is the generalized strong subadditivity of the von Neumann entropy for bosonic quantum systems proved in \cite{de2021generalized}.
Let $C$ be a bosonic quantum system with $N$ modes.
For each $i=1,\,\ldots,\,k$, let $F_i\in\mathbb{R}^{2N_i\times 2N}$ be a linear map\footnote{For the sake of a simpler notation, in the reminder of this section we always choose a basis such that the symplectic form has the canonical form \eqref{eq:Omega-sta} and consider all linear and bilinear forms as matrices in such basis.} that preserves the symplectic form, \ie such that
\begin{equation}\label{eq:BOmega}
    F_i\,\Omega_{2N}\,F_i^\intercal = \Omega_{2N_i}\,.
\end{equation}
Each $F_i$ identifies the subsystem $C_i$ of $C$ with $N_i$ modes with dual phase space $V_{C_i}^* = \mathrm{Im}\,F_i^\intercal$.
$C_i$ is associated to the quadratures
\begin{equation}
    \left\{\sum_{b=1}^{2N_i}(F_i)^a_{\phantom{a}b}\,\hat{\xi}^b:a=1,\,\ldots,\,2N_i\right\}\,,
\end{equation}
which thanks to the condition \eqref{eq:BOmega} satisfy the canonical commutation relations.
Let $\mathbf{p}\in\mathbb{R}^k_{\ge0}$ satisfy the scaling condition
\begin{equation}\label{eq:scaling}
    N = \sum_{i=1}^k p_i\,N_i\,.
\end{equation}
Ref. \cite{de2021generalized} considers the following maximization problem:
\begin{equation}\label{eq:fp}
    f(\mathbf{p}) = \sup_{\rho}\left\{S(C)(\rho) - \sum_{i=1}^k p_i\,S(C_i)(\rho)\right\}\,,
\end{equation}
where the supremum is performed over all the states $\rho$ of $C$ with finite covariance matrix, and proves that such infinitely dimensional optimization can be reduced to the following finite-dimensional optimization over $2N\times2N$ positive definite matrices:
\begin{equation}\label{eq:fpalpha}
    f(\mathbf{p}) = \sup_{G\in\mathbb{R}^{2N\times2N}_{>0}}\left(S_{\mathrm{as}}(C)(G) - \sum_{i=1}^k p_i\,S_{\mathrm{as}}(C_i)(G)\right)\,,
\end{equation}
where the asymptotic von Neumann entropy of $C$ and of each subsystem $C_i$ associated to $G\in\mathbb{R}^{2N\times2N}_{>0}$ are
\begin{equation}
    S_{\mathrm{as}}(C)(G) = \frac{1}{2}\ln\det\frac{e\,G}{2}\,,\qquad S_{\mathrm{as}}(C_i)(G) = \frac{1}{2}\ln\det\left(\frac{e}{2}\,F_i\,G\,F_i^\intercal\right)\,.
\end{equation}
\begin{rem}
If $G$ is a valid covariance matrix of a quantum state, the asymptotic von Neumann entropy of $G$ is equal to the Rényi entropy of order $2$ of $\sigma_G$ up to a constant:
\begin{equation}
    S_{\mathrm{as}}(C)(G) = S_2(C)(\sigma_G) + N\ln\frac{e}{2}\,.
\end{equation}
\end{rem}

The main idea of the proof of \eqref{eq:fpalpha} is perturbing the state with the quantum heat semigroup.
The same idea has been crucial in the proofs of several quantum versions of the Entropy Power Inequality \cite{konig2014entropy,konig2016corrections,de2014generalization,de2015multimode,koenig2015conditional,de2017gaussian,huber2017geometric,de2018conditional,de2018gaussian,huber2018conditional,de2019entropy,de2019new}, of which \eqref{eq:fpalpha} can be considered a generalization.
Let $\bar{G}$ achieve the maximum in \eqref{eq:fpalpha} (if the maximum is not achieved, the result can be obtained with a limiting argument).
Ref. \cite{de2021generalized} considers a generic state $\rho$ of $C$ with finite covariance matrix, and evolves it with the time evolution induced by the quantum heat semigroup that adds classical Gaussian noise with covariance matrix proportional to $\bar{G}$.
For any $t\ge0$, let $\rho_t$ be the state at time $t$.
On the one hand, Ref. \cite{de2021generalized} proves that the quantity to be maximized increases with time:
\begin{equation}
    \frac{d}{dt}\left(S(C)(\rho_t) - \sum_{i=1}^k p_i\,S(C_i)(\rho_t)\right)\ge0\,,
\end{equation}
such that
\begin{equation}
    S(C)(\rho) - \sum_{i=1}^k p_i\,S(C_i)(\rho) \le \lim_{t\to\infty}\left(S(C)(\rho_t) - \sum_{i=1}^k p_i\,S(C_i)(\rho_t)\right)\,.
\end{equation}
On the other hand, Ref. \cite{de2021generalized} proves that in the limit of infinite time, the maximum postulated in \eqref{eq:fpalpha} is always achieved, \ie for any initial state $\rho$,
\begin{equation}
    \lim_{t\to\infty}\left(S(C)(\rho_t) - \sum_{i=1}^k p_i\,S(C_i)(\rho_t)\right) = S_{\mathrm{as}}(C)(\bar{G}) - \sum_{i=1}^k p_i\,S_{\mathrm{as}}(C_i)(\bar{G})\,,
\end{equation}
and the claim follows.
In particular, the proof can be applied when $\rho$ is Gaussian.
Since the quantum heat semigroup preserves the set of the Gaussian states, the supremum in \eqref{eq:fp} can always be achieved by a sequence of Gaussian states, and can therefore be restricted to Gaussian states:
\begin{equation}\label{eq:fpG}
    f(\mathbf{p}) = \sup_{\sigma\,\mathrm{Gaussian}}\left\{S(C)(\sigma) - \sum_{i=1}^k p_i\,S(C_i)(\sigma)\right\}\,.
\end{equation}
Let us now provide an intuition of where the expression \eqref{eq:fpalpha} comes from.
The quantum heat semigroup described above acts on a Gaussian state $\sigma$ by adding $t\,\bar{G}$ to the covariance matrix.
If $G$ is a valid covariance matrix for a quantum state, $S_{\mathrm{as}}(G)$ approximates the entropy of the Gaussian state $\sigma_G$ when all ther symplectic eigenvlaues of $G$ are large.
Indeed, if $\nu_{\min}^2\ge1$ is the minumum eigenvalue of $-\left(G\,\Omega^{-1}\right)^2$, we have \cite[Lemma 9]{de2021generalized}
\begin{equation}
    S_{\mathrm{as}}(C)(G) - \frac{N}{\nu_{\min}^2}\ln\frac{e}{2} \le S(C)(\sigma_G) \le S_{\mathrm{as}}(C)(G)\,.
\end{equation}
Since $\bar{G}$ is positive definite, in the limit of infinite time the covariance matrix of $\sigma_t$ can be approximated with $t\,\bar{G}$, and the entropy of $\sigma_t$ is approximately
\begin{equation}
    S(C)(\sigma_t)\simeq S_{\mathrm{as}}(t\,\bar{G}) = S_{\mathrm{as}}(\bar{G}) + N\ln t\,.
\end{equation}
Similarly, for any $i=1,\,\ldots,\,k$, the covariance matrix of the marginal of $\sigma_t$ over $C_i$ can be approximated with $t\,F_i\,\bar{G}\,F_i^\intercal$, and
\begin{equation}
    S(C_i)(\sigma_t)\simeq S_{\mathrm{as}}(C_i)(t\,\bar{G}) = S_{\mathrm{as}}(C_i)(\bar{G}) + N_i\ln t\,.
\end{equation}
Thanks to the scaling condition \eqref{eq:scaling}, the terms proportional to $\ln t$ cancel each other and we get as wanted
\begin{equation}
    S(C)(\sigma_t) - \sum_{i=1}^k p_i\,S(C_i)(\sigma_t) \simeq S_{\mathrm{as}}(C)(\bar{G}) - \sum_{i=1}^k p_i\,S_{\mathrm{as}}(C_i)(\bar{G})\,.
\end{equation}

In this paper, we consider the setup with $k=4$ subsystems.
Let $N=N_A+N_B$, let
\begin{equation}
    F_A = \left(
  \begin{array}{cc}
\id_{2N_A} & 0_{2N_A\times2N_B}\\
  \end{array}
\right)
\end{equation}
select the first $2N_A$ components, and let
\begin{equation}
    F_B = \left(
  \begin{array}{cc}
 0_{2N_B\times 2N_A} & \id_{2N_B}\\
  \end{array}
\right)
\end{equation}
select the last $2N_B$ components, such that $A$ and $B$ are complementary subsystems of $C$, \ie $C=AB$.
Let $M\in\mathrm{Sp}(2N,\mathbb{R})$ be a symplectic matrix and let $F_{A'}=F_A\,M$ and $F_{B'} = F_B\,M$, such that the subsystems $A'$ and $B'$ correspond to the subsystems $A$ and $B$ after the application of the transformation $M$.
Then, setting
\begin{equation}
    p_A = p_B = p_{A'} = p_{B'} = \frac{1}{2}
\end{equation}
in \eqref{eq:fpalpha} and \eqref{eq:fpG} we get

\begin{prop}\label{prop:IM}
Let $AB$ be a bipartite bosonic quantum system with $N=N_A+N_B$ modes, where $A$ and $B$ have $N_A$ and $N_B$ modes, respectively.
Then, for any symplectic transformation $M\in\mathrm{Sp}(2N,\mathbb{R})$, the mutual information $I(A;B)$ of any quantum state $\rho$ of $AB$ with finite covariance matrix satisfies
\begin{align}\label{eq:main-ineq2}
    I(A;B)(\rho)+I(A;B)(\mathcal{U}_M(\rho)) &\geq \inf_{\sigma \,\mathrm{Gaussian}}\left(I(A;B)(\sigma)+I(A;B)(\mathcal{U}_M(\sigma))\right)\nonumber\\
    &= \inf_{G\in\mathbb{R}^{2N\times2N}_{>0}}\left(I_{\mathrm{as}}(A;B)(G)+I_{\mathrm{as}}(A;B)(M\,G\,M^\intercal)\right)\,,
\end{align}
where we have defined for any $G\in\mathbb{R}^{2N\times2N}_{>0}$
\begin{equation}
    I_{\mathrm{as}}(A;B)(G) = S_{\mathrm{as}}(A)(G) + S_{\mathrm{as}}(B)(G) - S_{\mathrm{as}}(AB)(G)\,.
\end{equation}
\end{prop}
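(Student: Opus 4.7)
The plan is to deduce this proposition as a specialization of the generalized strong subadditivity \eqref{eq:fp}--\eqref{eq:fpalpha} to four subsystems of $C = AB$, with all analytical work already packaged in the reviewed statements. First I would set $C_1 = A$, $C_2 = B$, $C_3 = A'$, $C_4 = B'$ using $F_{A'} = F_A M$ and $F_{B'} = F_B M$, with weights $p_A = p_B = p_{A'} = p_{B'} = \tfrac{1}{2}$. The symplectic condition \eqref{eq:BOmega} for $F_{A'}, F_{B'}$ is inherited from $M \in \mathrm{Sp}(2N,\mathbb{R})$, and the scaling condition \eqref{eq:scaling} becomes $N = \tfrac{1}{2}(N_A + N_B + N_A + N_B) = N_A + N_B$, which is just the number of modes of $C$.

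The key identity to establish is that, for every state $\rho$ of $AB$ with finite covariance matrix,
\begin{equation*}
    S(C)(\rho) - \sum_{i=1}^4 p_i\,S(C_i)(\rho) = -\tfrac{1}{2}\bigl(I(A;B)(\rho) + I(A;B)(\mathcal{U}_M(\rho))\bigr)\,.
\end{equation*}
This reduces to the marginal identity $S(A')(\rho) = S(A)(\mathcal{U}_M(\rho))$, and likewise for $B'$: the quadratures defining $A'$ are $\sum_b (F_A M)^a_{\phantom{a}b}\hat{\xi}^b$, which (for a suitable convention) agree with the Heisenberg-evolved quadratures $U_M^\dag F_A{}^a_{\phantom{a}b}\hat{\xi}^b\,U_M$ associated to $A$, so that the joint statistics of $\{F_A M\hat{\xi}\}$ in $\rho$ match those of $\{F_A \hat{\xi}\}$ in $\mathcal{U}_M(\rho) = U_M \rho U_M^\dag$. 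Combined with the unitary invariance $S(AB)(\rho) = S(AB)(\mathcal{U}_M(\rho))$, a direct rearrangement of the four entropy terms yields the displayed identity.

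With this identity in hand, both claims of \eqref{eq:main-ineq2} are inherited from the two equivalent formulations of $f(\mathbf{p})$. The Gaussian restriction \eqref{eq:fpG} turns the supremum over $\rho$ into a supremum over Gaussian $\sigma$; flipping signs via the factor $-\tfrac{1}{2}$ produces the first inequality of \eqref{eq:main-ineq2}. The finite-dimensional reformulation \eqref{eq:fpalpha} rewrites the same supremum as one over $G \in \mathbb{R}^{2N\times 2N}_{>0}$, where the asymptotic-entropy constants $N_i\ln(e/2)$ cancel thanks to the scaling condition $N_A + N_B = N$, yielding exactly $-\tfrac{1}{2}\inf_{G>0}\bigl(I_{\mathrm{as}}(A;B)(G) + I_{\mathrm{as}}(A;B)(MGM^\intercal)\bigr)$ and hence the stated equality. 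I do not expect a genuine obstacle here: the only step that needs any argument beyond bookkeeping is the marginal identification $S(A')(\rho) = S(A)(\mathcal{U}_M(\rho))$, which follows immediately from the Heisenberg action of $U_M$ on the quadratures and cyclicity of the trace. The real analytic content of the statement lives in the heat-semigroup argument already reviewed, not in this specialization.
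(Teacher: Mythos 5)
Your proposal is correct and follows essentially the same route as the paper, which obtains the proposition by specializing the generalized strong subadditivity \eqref{eq:fp}--\eqref{eq:fpalpha} and its Gaussian restriction \eqref{eq:fpG} to the four subsystems $A$, $B$, $A'=F_AM$, $B'=F_BM$ with weights $p_i=\tfrac{1}{2}$; your verification of the scaling condition, the marginal identification $S(A')(\rho)=S(A)(\mathcal{U}_M(\rho))$, and the sign-flip rearrangement into mutual informations are exactly the bookkeeping the paper leaves implicit.
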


The supremum in \eqref{eq:fpG} is always achieved in the limit of infinite covariance matrix, and is therefore never achieved by a Gaussian state with finite covariance matrix.
On the contrary, Ref. \cite{de2021generalized} proves that the supremum in \eqref{eq:fpalpha} is achieved iff there exists $\bar{G}\in\mathbb{R}^{2N\times2N}_{>0}$ satisfying
\begin{equation}\label{eq:condstat}
    \bar{G}^{-1} = \sum_{i=1}^k p_i\,F_i^\intercal\left(F_i\,\bar{G}\,F_i^\intercal\right)^{-1}F_i\,.
\end{equation}
Moreover, if such $\bar{G}$ exists, the supremum in \eqref{eq:fpalpha} is achieved by $G=\bar{G}$.
In general, an analytical solution of \eqref{eq:condstat} cannot be found.
However, in the setup of \autoref{prop:IM}, if the symplectic matrix $M$ is also positive definite, such solution is given by $\bar{G} = M^{-1}$.
We then get

\begin{prop}\label{prop:IMS}
In the setup of \autoref{prop:IM}, if the symplectic transformation $M$ is also positive definite, the mutual information $I(A;B)$ of any quantum state $\rho$ of $AB$ with finite covariance matrix satisfies
\begin{equation}\label{eq:main-ineq3}
    I(A;B)(\rho)+I(A;B)(\mathcal{U}_M(\rho)) \ge 2\,I_{\mathrm{as}}(A;B)(M)\,.
\end{equation}
\end{prop}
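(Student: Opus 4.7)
The plan is to apply \autoref{prop:IM} and identify $\bar{G} = M^{-1}$ as an explicit minimizer of the right-hand side of \eqref{eq:main-ineq2}. The discussion following \eqref{eq:condstat} asserts that any positive-definite solution $\bar{G}$ of the stationary equation attains the supremum in \eqref{eq:fpalpha}, and hence (after the sign flip that converts $f(\mathbf{p})$ into the mutual-information infimum) also attains the infimum appearing on the right-hand side of \eqref{eq:main-ineq2}. It therefore suffices to verify that $\bar{G} = M^{-1}$ satisfies \eqref{eq:condstat} and to evaluate the objective at that point.

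For the stationary equation, the assumption that $M$ is both positive definite and symplectic implies $M = M^\intercal$, so $F_A M \bar{G} M^\intercal F_A^\intercal = F_A M F_A^\intercal \equiv M_{AA}$, the $AA$ block of $M$, and similarly for $B$. Substituting $\bar{G} = M^{-1}$ and $p_i = \tfrac{1}{2}$ into \eqref{eq:condstat} with the four subsystems $F_A, F_B, F_A M, F_B M$, the two $A, B$ terms produce the block-diagonal matrix whose diagonal blocks are the Schur complements $M_{AA} - M_{AB} M_{BB}^{-1} M_{BA}$ and $M_{BB} - M_{BA} M_{AA}^{-1} M_{AB}$, while the two $A', B'$ terms produce $\bigl(\begin{smallmatrix}M_{AA}&M_{AB}\\M_{BA}&M_{BA}M_{AA}^{-1}M_{AB}\end{smallmatrix}\bigr)$ and $\bigl(\begin{smallmatrix}M_{AB}M_{BB}^{-1}M_{BA}&M_{AB}\\M_{BA}&M_{BB}\end{smallmatrix}\bigr)$, respectively. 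Summing the four contributions, the Schur-complement corrections cancel on the diagonal and the off-diagonal entries add up to $2M_{AB}$ and $2M_{BA}$, so the right-hand side of \eqref{eq:condstat} equals $\tfrac{1}{2}\cdot 2M = M = \bar{G}^{-1}$, as required.

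For the evaluation, $M \bar{G} M^\intercal = M M^{-1} M = M$, so the objective at $\bar{G}$ equals $I_{\mathrm{as}}(A;B)(M^{-1}) + I_{\mathrm{as}}(A;B)(M)$. Because $2N_A + 2N_B = 2N$, the $e/2$ factors in $S_{\mathrm{as}}$ cancel and $I_{\mathrm{as}}(A;B)(G) = \tfrac{1}{2}\ln\bigl(\det G_{AA}\det G_{BB}/\det G\bigr)$. A standard Schur-complement identity gives $\det(M^{-1})_{AA} = \det M_{BB}/\det M$ and $\det(M^{-1})_{BB} = \det M_{AA}/\det M$, which together with $\det M^{-1} = 1/\det M$ yield the self-dual property $I_{\mathrm{as}}(A;B)(M^{-1}) = I_{\mathrm{as}}(A;B)(M)$. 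Hence the value at $\bar{G} = M^{-1}$ is $2\,I_{\mathrm{as}}(A;B)(M)$, which is the desired bound.

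The main obstacle is the block-matrix verification that $\bar{G} = M^{-1}$ solves \eqref{eq:condstat}; everything else is routine determinant bookkeeping. What makes the positive-definite case tractable, compared with the general symplectic $M$ of \autoref{prop:IM}, is precisely that one can write down this closed-form critical point rather than having to control the infimum over $G$ abstractly.
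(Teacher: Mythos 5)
Your proof is correct, and it follows the same overall strategy as the paper's: exhibit $\bar{G}=M^{-1}$ as the optimizer of the right-hand side of \eqref{eq:main-ineq2} and show that the value there equals $2\,I_{\mathrm{as}}(A;B)(M)$. The differences are local but real. First, you verify by explicit block computation that $\bar{G}=M^{-1}$ solves the stationarity condition \eqref{eq:condstat}; your bookkeeping checks out (the Schur-complement corrections from the $A$, $B$ terms cancel against the $A'$, $B'$ contributions, and the off-diagonal blocks sum to $2M_{AB}$ and $2M_{BA}$, so the sum is $\tfrac{1}{2}\cdot 2M=M=\bar{G}^{-1}$). The paper skips this and directly invokes \cite[Proposition 19]{de2021generalized} for the resulting lower bound $I_{\mathrm{as}}(A;B)(M^{-1})+I_{\mathrm{as}}(A;B)(M)$; your version is more self-contained, though you still rely on the quoted fact that a positive-definite solution of \eqref{eq:condstat} attains the supremum in \eqref{eq:fpalpha}, which is itself imported from \cite{de2021generalized}, so the dependence on that reference is reduced rather than eliminated. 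Second, for the self-duality $I_{\mathrm{as}}(A;B)(M^{-1})=I_{\mathrm{as}}(A;B)(M)$ you use the determinant identities $\det(M^{-1})_{AA}=\det M_{BB}/\det M$ and $\det(M^{-1})_{BB}=\det M_{AA}/\det M$ together with the cancellation of the $e/2$ factors; this argument needs only that $M$ is symmetric and positive definite. The paper instead uses the symplectic conjugation $M^{-1}=\Omega_{2N}^{\intercal}M\,\Omega_{2N}$ with $\Omega_{2N}$ block-diagonal with respect to the $A\oplus B$ split, which is a one-line consequence of $M$ being both symplectic and symmetric. Both routes are valid; yours is marginally more general at that step, the paper's is shorter given the structure already in play.
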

\begin{proof}
From \cite[Proposition 19]{de2021generalized} we get
\begin{equation}
    I(A;B)(\rho)+I(A;B)(\mathcal{U}_M(\rho)) \ge I_{\mathrm{as}}(A;B)(M^{-1})+I_{\mathrm{as}}(A;B)(M)\,.
\end{equation}
Since $M$ is both symplectic and symmetric, we have
\begin{equation}
    M^{-1} = \Omega_{2N}^{-1}\,M\,\Omega_{2N} = \Omega_{2N}^\intercal\,M\,\Omega_{2N}\,,
\end{equation}
and $\Omega_{2N}$ is symplectic and block-diagonal with respect to the decomposition $\mathbb{R}^{2N} = \mathbb{R}^{2N_A}\oplus\mathbb{R}^{2N_B}$.
Therefore, we have
\begin{align}
    I_{\mathrm{as}}(A;B)(M^{-1}) &= S_{\mathrm{as}}(A)(M^{-1}) + S_{\mathrm{as}}(B)(M^{-1}) - S_{\mathrm{as}}(AB)(M^{-1})\nonumber\\
    &= S_{\mathrm{as}}(A)(M) + S_{\mathrm{as}}(B)(M) - S_{\mathrm{as}}(AB)(M) = I_{\mathrm{as}}(A;B)(M)\,.
\end{align}
The claim follows.
\end{proof}

All the results of Ref. \cite{de2021generalized} are actually proved in the more general setup where all the entropies are conditioned on an arbitrary quantum system $R$ with separable Hilbert space.
The generalized version of the maximization problem \eqref{eq:fp} is
\begin{equation}\label{eq:fpc}
    f(\mathbf{p}) = \sup_{\rho}\left\{S(C|R)(\rho) - \sum_{i=1}^k p_i\,S(C_i|R)(\rho)\right\}\,,
\end{equation}
where the supremum is performed over all the states $\rho$ of the joint quantum system $CR$ such that $\rho_C$ has finite covariance matrix and $\rho_R$ has finite entropy.
Ref. \cite{de2021generalized} proves that the supremum in \eqref{eq:fpc} coincides with the supremum in \eqref{eq:fp}, and is therefore given by \eqref{eq:fpalpha}.
Therefore, \autoref{prop:IMS} can be generalized as follows:

\begin{prop}\label{prop:IMSC}
Let $AB$ be a bipartite bosonic quantum system with $N=N_A+N_B$ modes, where $A$ and $B$ have $N_A$ and $N_B$ modes, respectively, and let $R$ be an arbitrary quantum system with separable Hilbert space.
Let $M\in\mathrm{Sp}(2N,\mathbb{R})$ be a symplectic matrix that is also positive definite.
Then, the conditional mutual information $I(A;B|R)$ of any quantum state $\rho$ of $ABR$ such that $\rho_{AB}$ has finite covariance matrix and $\rho_R$ has finite entropy satisfies
\begin{equation}\label{eq:main-ineq4}
    I(A;B|R)(\rho)+I(A;B|R)(\mathcal{U}_M(\rho)) \ge 2\,I_{\mathrm{as}}(A;B)(M)\,.
\end{equation}
\end{prop}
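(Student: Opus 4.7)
The plan is to mirror the proof of \autoref{prop:IMS} step by step, replacing each invocation of an unconditional entropic bound by its conditional analogue. The crucial observation, highlighted in the discussion preceding the statement, is that the conditional generalized strong subadditivity of \cite{de2021generalized} solves the same extremal problem as the unconditional one: the supremum in \eqref{eq:fpc} over states on $CR$ with $\rho_R$ of finite entropy coincides with the supremum in \eqref{eq:fp}, and hence equals $f(\mathbf{p})$ as given by \eqref{eq:fpalpha}. This means that every finite-dimensional Gaussian optimization appearing on the right-hand side of \autoref{prop:IM} and \autoref{prop:IMS} is unchanged when one adds a conditioning system $R$.

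First I would specialize the conditional version of the generalized strong subadditivity to the four-subsystem setup used to derive \autoref{prop:IM}, namely $k=4$ with $F_A$, $F_B$, $F_{A'}=F_A M$, $F_{B'}=F_B M$ and $p_A=p_B=p_{A'}=p_{B'}=\tfrac12$. This yields the conditional analogue of \eqref{eq:main-ineq2}: for any state $\rho$ of $ABR$ satisfying the stated regularity assumptions,
\begin{equation}
I(A;B|R)(\rho)+I(A;B|R)(\mathcal{U}_M(\rho))\ \ge\ \inf_{G\in\mathbb{R}^{2N\times2N}_{>0}}\bigl(I_{\mathrm{as}}(A;B)(G)+I_{\mathrm{as}}(A;B)(M\,G\,M^\intercal)\bigr)\,.
\end{equation}
Because the infimum on the right is identical to the one bounding the unconditional mutual information, the optimizer identified in the proof of \autoref{prop:IMS} applies verbatim. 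Specifically, when $M$ is both symplectic and positive definite, the stationarity condition \eqref{eq:condstat} is solved by $\bar G=M^{-1}$, so the infimum is attained and equals $I_{\mathrm{as}}(A;B)(M^{-1})+I_{\mathrm{as}}(A;B)(M)$.

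Next I would invoke the symmetry argument used at the end of the proof of \autoref{prop:IMS}: since $M$ is symplectic and symmetric, $M^{-1}=\Omega_{2N}^\intercal M\,\Omega_{2N}$, and $\Omega_{2N}$ is block-diagonal with respect to the decomposition $\mathbb{R}^{2N}=\mathbb{R}^{2N_A}\oplus\mathbb{R}^{2N_B}$. Consequently the asymptotic entropies of $M^{-1}$ on $A$, $B$ and $AB$ agree with those of $M$, giving $I_{\mathrm{as}}(A;B)(M^{-1})=I_{\mathrm{as}}(A;B)(M)$. Combining this with the previous display yields the claimed bound $2\,I_{\mathrm{as}}(A;B)(M)$.

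The only genuinely new content relative to \autoref{prop:IMS} is the first step, where one must ensure that the conditional strengthening of the generalized strong subadditivity is indeed available and that its right-hand side is the same Gaussian optimum; the main obstacle is therefore purely one of citation hygiene, namely pinpointing the statement in \cite{de2021generalized} that establishes the equality of \eqref{eq:fp} and \eqref{eq:fpc} (and hence the corresponding equivalence at the level of the four-subsystem bound). Once that is in hand, the symplectic-symmetric manipulation from \autoref{prop:IMS} transfers with no modification, and no new analytic estimate is required.
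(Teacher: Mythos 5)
Your proposal is correct and matches the paper's own reasoning: the paper likewise derives \autoref{prop:IMSC} by noting that the conditional maximization \eqref{eq:fpc} in \cite{de2021generalized} has the same value as the unconditional one \eqref{eq:fp}, so the Gaussian optimum $\bar G=M^{-1}$ and the symmetry argument $I_{\mathrm{as}}(A;B)(M^{-1})=I_{\mathrm{as}}(A;B)(M)$ from \autoref{prop:IMS} carry over unchanged. No gap.
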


\subsection{Entropy growth for pure states}
We now consider again a bipartite bosonic quantum system $AB$ with $N=N_A+N_B$ modes and a quadratic Hamiltonian $\hat{H}(t)$ as in~\eqref{eq:quadratic} giving rise to a symplectic transformation $M(t)$, but this time we only assume that the initial state is pure and has a finte covariance matrix.
We decompose $M(t)$ as
\begin{align}\label{eq:polar-decomposition}
    M(t)=T(t)\,u(t)\quad\text{with}\quad T(t)=\sqrt{M(t)\,{M(t)}^\intercal}
    \quad\text{and}\quad u(t)=M(t)\,{T(t)}^{-1}\,,
\end{align}
such that $T(t)$ is both symplectic and positive definite, and $u(t)$ is both symplectic and orthogonal.
We will now be able to relate the limiting matrices of $M(t)$ and $T(t)$.

\begin{prop}
Given a one-parameter family of symplectic transformations $M(t)$ and its positive symmetric part $T(t)$ as defined in~\eqref{eq:polar-decomposition}, both $M(t)$ and $T(t)$ have the same Lyapunov spectrum and Lyapunov basis, \ie $L(M) = L(T)$.
Moreover, $L\left(\sqrt{T}\right) = L(T)/2$, \ie $\sqrt{T(t)}$ has the same Lyapunov basis as $M(t)$ and $T(t)$, and its Lyapunov exponents are half of the Lyapunov exponents of $M(t)$ and $T(t)$.
\end{prop}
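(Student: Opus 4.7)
The plan is to exploit the fact that the limiting matrix $L(M)$ defined in~\eqref{eq:defL} sees $M(t)$ only through the positive symmetric combination $M(t)\,M(t)^\intercal$, and that this combination is identical for $M(t)$ and for its polar positive part $T(t)$. From the polar decomposition $M(t)=T(t)\,u(t)$ with $u(t)\,u(t)^\intercal=\id$ and $T(t)$ symmetric, I would first note
\begin{align*}
M(t)\,M(t)^\intercal = T(t)\,u(t)\,u(t)^\intercal\,T(t)^\intercal = T(t)^2,
\end{align*}
so the orthogonal factor $u(t)$ drops out completely.

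Feeding this into~\eqref{eq:defL} gives $L(T)$ at once: since $T(t)$ is itself symmetric positive definite, $T(t)\,T(t)^\intercal=T(t)^2$, so
\begin{align*}
L(T) = \lim_{t\to\infty}\frac{\ln T(t)^2}{2t} = \lim_{t\to\infty}\frac{\ln\bigl(M(t)\,M(t)^\intercal\bigr)}{2t} = L(M),
\end{align*}
and existence of one limit automatically implies existence of the other. For the square-root statement, I would set $S(t)=\sqrt{T(t)}$, which is again symmetric positive definite, so $S(t)\,S(t)^\intercal=S(t)^2=T(t)$. Using the functional-calculus identity $\ln T(t)=\tfrac{1}{2}\ln\bigl(M(t)\,M(t)^\intercal\bigr)$, valid because $T(t)$ is positive with $T(t)^2=M(t)\,M(t)^\intercal$, I then compute
\begin{align*}
L(\sqrt{T}) = \lim_{t\to\infty}\frac{\ln T(t)}{2t} = \lim_{t\to\infty}\frac{\ln\bigl(M(t)\,M(t)^\intercal\bigr)}{4t} = \tfrac{1}{2}\,L(M) = \tfrac{1}{2}\,L(T).
\end{align*}

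The statements about the Lyapunov basis and spectrum are then immediate: the Lyapunov basis is by definition the orthonormal eigenbasis of the limiting matrix and the Lyapunov exponents are its eigenvalues, so $L(M)=L(T)$ forces $M$ and $T$ to share the same basis $\mathcal{D}$ and the same spectrum, while $L(\sqrt{T})=\tfrac{1}{2}L(T)$ forces $\sqrt{T}$ to share the same $\mathcal{D}$ with all exponents halved. I do not anticipate any genuine obstacle here: the argument is essentially a one-line consequence of the polar decomposition together with the elementary observation that $\ln$, squaring, and square-rooting of commuting positive symmetric matrices act as the corresponding scalar operations on the joint spectrum. The only point worth stating carefully is that applying $\ln$ and $\sqrt{\cdot}$ via functional calculus requires strict positivity of $T(t)$, which is automatic from $M(t)\in\mathrm{Sp}(2N,\mathbb{R})$ being invertible.
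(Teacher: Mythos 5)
Your proposal is correct and follows essentially the same route as the paper: the limiting matrix depends on $M(t)$ only through $M(t)\,M(t)^\intercal = T(t)^2 = T(t)\,T(t)^\intercal$, so the orthogonal factor drops out and $L(M)=L(T)$, while $L\bigl(\sqrt{T}\bigr)=\lim_{t\to\infty}\ln T(t)/(2t)=L(T)/2$ by functional calculus on the positive definite $T(t)$. If anything, your write-up is more explicit than the paper's rather terse justification of the first claim, but the content is the same.
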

\begin{proof}
The first claim follows straightforward from the fact that both Lyapunov spectrum and Lyapunov basis are defined with respect to the dual transformations ${M(t)}^\intercal$ and ${T(t)}^\intercal$ acting on the dual phase space $V^*$. The second claim follows since
\begin{equation}
    L\left(\sqrt{T}\right) = \lim_{t\to\infty}\frac{\ln T(t)}{2t} = \frac{L(T)}{2}\,.
\end{equation}
where we use that $T(t)$ is positive definite.
\end{proof}

\begin{cor}\label{cor:LambdaT}
Let $\Lambda_A$ be the exponent of the subsystem $A$ with respect to the time evolution induced by $M(t)$.
Then, the subsystem exponent of $A$ with respect to $T(t)$ is equal to $\Lambda_A$, and the exponent with respect to $\sqrt{T(t)}$ is equal to $\Lambda_A/2$.
\end{cor}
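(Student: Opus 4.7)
The plan is to reduce the corollary to a direct application of \autoref{prop:subsystem-exponent} using the Lyapunov data computed in the preceding proposition. Recall that the algorithm in \autoref{prop:subsystem-exponent} computes $\Lambda_A$ using only two inputs: the Lyapunov basis $\mathcal{D} = (\ell^1,\dots,\ell^{2N})$ of the flow and the associated Lyapunov spectrum $(\lambda_1,\dots,\lambda_{2N})$. The selection map $k\mapsto i_k$ is extracted purely from the coordinate representation of a Darboux basis $\mathcal{D}_A$ of $V_A^*$ in the basis $\mathcal{D}$, so it depends on the flow only through $\mathcal{D}$; the final subsystem exponent is then the sum $\sum_{k=1}^{2N_A}\lambda_{i_k}$.

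For the first claim, the preceding proposition gives $L(M) = L(T)$. Consequently, $T(t)$ has exactly the same Lyapunov basis $\mathcal{D}$ and the same spectrum $(\lambda_1,\dots,\lambda_{2N})$ as $M(t)$. Feeding these identical data into the algorithm of \autoref{prop:subsystem-exponent} yields the same selection indices $i_k$ and the same sum, so the subsystem exponent of $A$ with respect to $T(t)$ equals $\Lambda_A$.

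For the second claim, the preceding proposition gives $L(\sqrt{T}) = L(T)/2$. Hence $\sqrt{T(t)}$ has the same Lyapunov basis $\mathcal{D}$ as $M(t)$, while every Lyapunov exponent is rescaled from $\lambda_i$ to $\lambda_i/2$. Since the selection indices $i_k$ depend only on the basis, the algorithm now returns $\sum_{k=1}^{2N_A}\lambda_{i_k}/2 = \Lambda_A/2$, as claimed.

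The only point that needs a moment's care is regularity: \autoref{prop:subsystem-exponent} is stated for regular Hamiltonian systems. For $M(t)$ this is an assumption of the corollary; for $T(t)$ and $\sqrt{T(t)}$ it is inherited, since a regular flow is characterized by the existence of the limiting matrix $L$ and a Lyapunov spectrum in conjugate pairs, both of which are preserved (respectively rescaled by $1/2$) under the identities $L(T)=L(M)$ and $L(\sqrt{T})=L(T)/2$. Consequently no new analytic input is needed beyond the preceding proposition and the algorithmic characterization of $\Lambda_A$.
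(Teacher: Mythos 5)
Your proof is correct and follows exactly the route the paper intends: the corollary is stated without proof because it is meant to follow immediately from the preceding proposition ($L(M)=L(T)$ and $L(\sqrt{T})=L(T)/2$) combined with the fact that the algorithm of \autoref{prop:subsystem-exponent} depends only on the Lyapunov basis and spectrum, which is precisely the argument you spell out. Your added remark on why regularity is inherited by $T(t)$ and $\sqrt{T(t)}$ is a worthwhile detail the paper leaves implicit.
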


Combining \autoref{prop:IMS} with the decomposition \eqref{eq:polar-decomposition}, we can get a lower bound to the entanglement entropy generated by a generic symplectic transformation:

\begin{prop}\label{prop:main}
Let $T$ be a symplectic and positive definite matrix, let $u$ be a symplectic and orthogonal matrix, and let $M=T\,u$.
Then, for any pure quantum state $\rho$ of $AB$ with finite covariance matrix $G$ we have
\begin{equation}
    S(A)(\mathcal{U}_M(\rho)) \ge S_{\mathrm{as}}(A)(T) + S_{\mathrm{as}}(B)(T) - N\ln\frac{e}{2} - N_A\ln\frac{e\left\|G\right\|_\infty}{2}\,.
\end{equation}
\begin{proof}
We have
\begin{align}
    S(A)(\mathcal{U}_M(\rho)) &= S(A)(\mathcal{U}_T(\mathcal{U}_u(\rho))) + S(A)(\mathcal{U}_u(\rho)) - S(A)(\mathcal{U}_u(\rho))\nonumber\\
    &\overset{\mathrm{(a)}}{\ge} \frac{1}{2}\,I(A;B)(\mathcal{U}_T(\mathcal{U}_u(\rho))) + \frac{1}{2}\,I(A;B)(\mathcal{U}_u(\rho)) - S(A)(\sigma_{u G u^\intercal})\nonumber\\
    &\overset{\mathrm{(b)}}{\ge} I_{\mathrm{as}}(A;B)(T) - S(A)(\sigma_{\|G\|_\infty\id_{2N}})\nonumber\\
    &\overset{\mathrm{(c)}}{\ge} S_{\mathrm{as}}(A)(T) + S_{\mathrm{as}}(B)(T) - N\ln\frac{e}{2}  - S_{\mathrm{as}}(A)(\sigma_{\|G\|_\infty\id_{2N}})\nonumber\\
    &\ge S_{\mathrm{as}}(A)(T) + S_{\mathrm{as}}(B)(T) - N\ln\frac{e}{2}  - N_A\ln\frac{e\left\|G\right\|_\infty}{2}\,.
\end{align}
(a) follows observing that both $\mathcal{U}_T(\mathcal{U}_u(\rho))$ and $\mathcal{U}_u(\rho)$ are pure states and that Gaussian states maximize the entropy among all the states with the same covariance matrix.
(b) follows from \autoref{prop:IMS} and observing that the entropy of a Gaussian state is an increasing function of the covariance matrix.
(c) follows since $\det T = 1$.
The claim follows.
\end{proof}
\end{prop}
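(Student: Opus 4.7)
The plan is to exploit the factorization $M = T\,u$ by applying \autoref{prop:IMS} to the positive definite symplectic piece $T$ while showing that the orthogonal symplectic piece $u$ contributes only a harmless, time-independent correction. Since $\rho$ is pure and both $\mathcal{U}_u$ and $\mathcal{U}_T$ are unitary channels, the states $\mathcal{U}_u(\rho)$ and $\mathcal{U}_M(\rho)=\mathcal{U}_T(\mathcal{U}_u(\rho))$ remain pure, so the entanglement entropy of either marginal equals half the mutual information. Using this, I would write the telescoping identity
\begin{equation*}
S(A)(\mathcal{U}_M(\rho)) \;=\; \tfrac{1}{2}I(A;B)(\mathcal{U}_T\mathcal{U}_u(\rho)) + \tfrac{1}{2}I(A;B)(\mathcal{U}_u(\rho)) - S(A)(\mathcal{U}_u(\rho)),
\end{equation*}
obtained by adding and subtracting a copy of $S(A)(\mathcal{U}_u(\rho))=\tfrac{1}{2}I(A;B)(\mathcal{U}_u(\rho))$.

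The first two terms are exactly the left-hand side of \autoref{prop:IMS} applied to the state $\mathcal{U}_u(\rho)$ (whose covariance matrix $u\,G\,u^\intercal$ is still finite) and the positive definite symplectic matrix $T$. This yields the lower bound $I_{\mathrm{as}}(A;B)(T) = S_{\mathrm{as}}(A)(T) + S_{\mathrm{as}}(B)(T) - S_{\mathrm{as}}(AB)(T)$. Because $T$ is symplectic, $\det T = 1$, so $S_{\mathrm{as}}(AB)(T) = \tfrac{1}{2}\ln\det(eT/2) = N\ln(e/2)$, cleaning up the bound into $S_{\mathrm{as}}(A)(T) + S_{\mathrm{as}}(B)(T) - N\ln(e/2)$.

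The subtracted term $S(A)(\mathcal{U}_u(\rho))$ must then be controlled from above. First, I would invoke that Gaussian states maximize the von Neumann entropy at fixed covariance matrix, replacing $\mathcal{U}_u(\rho)$ by $\sigma_{u G u^\intercal}$. Then, using that $u$ is orthogonal, we have $u\,G\,u^\intercal \preceq \|G\|_\infty\,\id_{2N}$, and by the monotonicity of Gaussian entropy in the covariance matrix, the marginal entropy is bounded by $S(A)(\sigma_{\|G\|_\infty\id_{2N}}) \le S_{\mathrm{as}}(A)(\|G\|_\infty\id_{2N}) = N_A\ln(e\|G\|_\infty/2)$. Combining the three bounds produces exactly the stated inequality.

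The genuinely delicate point is this last step: the argument would be useless if the correction depended on $\|M\,G\,M^\intercal\|_\infty$, which generically blows up exponentially in time. The entire value of peeling off the orthogonal factor $u$ before applying \autoref{prop:IMS} lies in the fact that orthogonal conjugation preserves $\|\cdot\|_\infty$, so the error is controlled by the \emph{initial} covariance norm $\|G\|_\infty$, a constant of the motion from the point of view of the asymptotic analysis. Everything else in the proof is a bookkeeping combination of known facts (pure-state identity, Gaussian maximization, $\det T = 1$, and \autoref{prop:IMS}); the real content is to see that this polar-type decomposition isolates the dangerous dynamical stretching into a piece where \autoref{prop:IMS} is directly applicable.
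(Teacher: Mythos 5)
Your proposal is correct and follows essentially the same route as the paper's own proof: the same telescoping identity built on the pure-state relation $S(A)=\tfrac{1}{2}I(A;B)$, the same application of \autoref{prop:IMS} to $\mathcal{U}_u(\rho)$ with the positive definite factor $T$, and the same control of the subtracted term via Gaussian maximization, orthogonal invariance of $\|\cdot\|_\infty$, and $\det T=1$. Your closing remark about why the orthogonal factor must be peeled off first is exactly the point of the decomposition.
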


This is leads us to the main result of the present paper for pure states.

\begin{thm}[Linear growth of entanglement entropy\label{th:entropy}]
For any initial pure state $\rho$ with finite covariance matrix and for any time-dependent quadratic Hamiltonian $\hat{H}(t)$ inducing a symplectic evolution $M(t)$ such that the limiting matrix \eqref{eq:defL} exists, the entanglement entropy with respect to the bipartition $AB$ grows asymptotically as
\begin{align}
    S(A)(\rho(t)) = \Lambda_A \,t + o(t)\quad\text{as}\quad t\to\infty\,,
\end{align}
where the subsystem exponent $\Lambda_A$ is independent of the initial state $\rho$ and can be computed according to \autoref{prop:subsystem-exponent}.
\begin{proof}
On the one hand, \autoref{prop:Gaussian} implies
\begin{equation}
    S(A)(\rho(t)) \le \Lambda_A\,t + o(t)\,.
\end{equation}
On the other hand, let $\sigma_{\mathbbm{1}}$ be the vacuum state of $AB$.
We have from \autoref{prop:main} and \autoref{cor:LambdaT}
\begin{align}
    S(A)(\rho(t)) &\ge S_{\mathrm{as}}(A)(T(t)) + S_{\mathrm{as}}(B)(T(t)) + O(1)\nonumber\\
    &= S_2(A)\left(\mathcal{U}_{\sqrt{T(t)}}(\sigma_{\mathbbm{1}})\right) + S_2(B)\left(\mathcal{U}_{\sqrt{T(t)}}(\sigma_{\mathbbm{1}})\right) + O(1) = \Lambda_A\,t + o(t)\,,
\end{align}
and the claim follows.
\end{proof}
\end{thm}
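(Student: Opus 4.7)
The plan is a sandwich argument, bounding $S(A)(\rho(t))$ above and below by $\Lambda_A\,t + o(t)$.

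For the upper bound I would appeal directly to \autoref{prop:Gaussian}: the covariance matrix of $\rho(t)=\mathcal{U}_{M(t)}(\rho)$ evolves by the same rule $G(t)=M(t)\,G(0)\,{M(t)}^\intercal$ irrespective of whether $\rho$ is Gaussian, and the Gaussian state is the entropy maximizer among all states with a prescribed covariance matrix, so \autoref{prop:Gaussian} already delivers $S(A)(\rho(t)) \le \Lambda_A\,t + o(t)$.

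The lower bound is the substantive direction. I would use the polar decomposition $M(t)=T(t)\,u(t)$ from \eqref{eq:polar-decomposition} and feed it into \autoref{prop:main} to obtain
\begin{equation}
    S(A)(\rho(t)) \ge S_{\mathrm{as}}(A)(T(t)) + S_{\mathrm{as}}(B)(T(t)) - N\ln\tfrac{e}{2} - N_A\ln\tfrac{e\,\|G(0)\|_\infty}{2}.
\end{equation}
The last two terms are $O(1)$ by finiteness of $G(0)$. To convert the first two into a growth rate, I would note that $T(t)=\sqrt{T(t)}\,\mathbbm{1}\,\sqrt{T(t)}^\intercal$ realizes $T(t)$ as the covariance matrix of the vacuum $\sigma_{\mathbbm{1}}$ evolved by the symplectic transformation $\sqrt{T(t)}$, so
\begin{equation}
    S_{\mathrm{as}}(A)(T(t)) = S_2(A)\bigl(\mathcal{U}_{\sqrt{T(t)}}(\sigma_{\mathbbm{1}})\bigr) + N_A\ln\tfrac{e}{2},
\end{equation}
and likewise for $B$. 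By \autoref{cor:LambdaT}, $\sqrt{T(t)}$ carries subsystem exponents $\Lambda_A/2$ and $\Lambda_B/2$, so the Rényi-entropy analysis that underlies \autoref{prop:Gaussian} (\autoref{prop:geometric-Rényi} combined with \autoref{prop:subsystem-exponent}) yields $S_2(A)(\mathcal{U}_{\sqrt{T(t)}}(\sigma_{\mathbbm{1}}))=\tfrac{\Lambda_A}{2}t+o(t)$ and similarly for $B$.

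The final ingredient is the identity $\Lambda_A=\Lambda_B$. This follows by applying \autoref{prop:Gaussian} to any pure Gaussian initial state, e.g.\ the vacuum $\sigma_{\mathbbm{1}}$: the evolved state remains pure Gaussian under $\mathcal{U}_{M(t)}$, so its entanglement entropies across the two sides of the bipartition must agree, forcing equality of the subsystem exponents. Summing the two half-rates then produces the desired lower bound $\Lambda_A\,t+o(t)$, matching the upper bound. The main conceptual obstacle the argument has to navigate is that \autoref{prop:IMS}, and hence \autoref{prop:main}, only controls the entropy generated by the \emph{positive} symplectic factor $T(t)$, not by $M(t)$ directly. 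The polar decomposition is what reduces the general case to the positive case: the orthogonal factor $u(t)$ is absorbed using unitary invariance of the entropy together with the Gaussian extremal principle, while the Lyapunov content of $M(t)$ is preserved under passage to $T(t)$ and halved under passage to $\sqrt{T(t)}$, which is exactly the bookkeeping required to close the sandwich.
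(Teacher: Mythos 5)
Your proposal is correct and follows essentially the same route as the paper's proof: \autoref{prop:Gaussian} for the upper bound, and the polar decomposition \eqref{eq:polar-decomposition} fed into \autoref{prop:main} together with \autoref{cor:LambdaT} for the lower bound, identifying $S_{\mathrm{as}}(A)(T(t))$ with the R\'enyi entropy of the vacuum evolved by $\sqrt{T(t)}$. The only difference is that you spell out the identity $\Lambda_A=\Lambda_B$ (needed to turn the two half-rates $\Lambda_A/2$ and $\Lambda_B/2$ into $\Lambda_A\,t$), which the paper uses implicitly, and your justification via purity of the evolved Gaussian state is the right one.
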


\subsection{Squashed entanglement growth for mixed states}
Combining \autoref{prop:IMSC} with the decomposition \eqref{eq:polar-decomposition}, we can get both an upper and a lower bound to the squashed entanglement generated by a generic symplectic transformation:

\begin{prop}\label{prop:mains}
Let $T$ be a symplectic and positive definite matrix, let $u$ be a symplectic and orthogonal matrix, and let $M=T\,u$.
Then, for any (generally mixed) quantum state $\rho$ of $A$ with finite covariance matrix $G$ we have
\begin{align}
    E_{\mathrm{sq}}(\mathcal{U}_M(\rho)) &\le \frac{1}{2}\,S_{\mathrm{as}}(A)(T^2) + \frac{1}{2}\,S_{\mathrm{as}}(B)(T^2) + \frac{N}{2}\ln\left\|G\right\|_\infty\,,\nonumber\\
    E_{\mathrm{sq}}(\mathcal{U}_M(\rho)) &\ge S_{\mathrm{as}}(A)(T) + S_{\mathrm{as}}(B)(T) - 2N\ln\frac{e}{2}  - N\ln\left\|G\right\|_\infty\,.
\end{align}
\begin{proof}
The proof proceeds along the same lines as the proof of \autoref{prop:main}.

\emph{Upper bound:}
We have from the subadditivity of the entropy
\begin{equation}
    E_{\mathrm{sq}}(\mathcal{U}_M(\rho)) \le \frac{1}{2}\,I(A;B)(\mathcal{U}_M(\rho)) \le \frac{S(A)(\mathcal{U}_M(\rho)) + S(B)(\mathcal{U}_M(\rho))}{2}\,.
\end{equation}
Since Gaussian states maximize the entropy among all the states with the same covariance matrix, we have
\begin{align}
    S(A)(\mathcal{U}_M(\rho)) &\le S(A)(\sigma_{MG M^\intercal}) \le S_{\mathrm{as}}(A)\left(M\,G\,M^\intercal\right) \le S_{\mathrm{as}}(A)\left(\left\|G\right\|_\infty M\,M^\intercal\right)\nonumber\\
    &= S_{\mathrm{as}}(A)(T^2) + N_A\ln\left\|G\right\|_\infty\,,
\end{align}
and the claim follows.

\emph{Lower bound:}
Let $R$ be an arbitrary finite-dimensional quantum system, and let $\tilde{\rho}$ be a quantum state of the joint quantum system $AR$ such that $\mathrm{Tr}_R\tilde{\rho} = \rho$.
We notice that $\mathcal{U}_M(\tilde{\rho})$ is an extension of $\mathcal{U}_M(\rho)$, \ie $\mathrm{Tr}_R\mathcal{U}_M(\tilde{\rho}) = \mathcal{U}_M(\rho)$.
We have from the subadditivity of the entropy
\begin{align}\label{eq:ineqI}
    I(A;B|R)(\mathcal{U}_u(\tilde{\rho})) &= S(A|R)(\mathcal{U}_u(\tilde{\rho})) - S(A|BR)(\mathcal{U}_u(\tilde{\rho})) \le 2\,S(A)(\mathcal{U}_u(\tilde{\rho}))\nonumber\\
    &= 2\,S(A)(\mathcal{U}_u(\rho)) \le 2\,S(A)(\sigma_{uG u^\intercal}) \le 2\,S(A)(\sigma_{\|G\|_\infty\id_{2N}})\nonumber\\
    &\le 2\,S_{\mathrm{as}}(A)(\left\|G\right\|_\infty\id_{2N}) = 2N_A\ln\frac{e\left\|G\right\|_\infty}{2}\,.
\end{align}
Repeating the same procedure of \eqref{eq:ineqI} by switching the subsystems $A$ and $B$ and taking the average with \eqref{eq:ineqI} we get
\begin{equation}
    I(A;B|R)(\mathcal{U}_u(\tilde{\rho})) \le N\ln\frac{e\left\|G\right\|_\infty}{2}\,.
\end{equation}
We have from \autoref{prop:IMSC}
\begin{align}\label{eq:Isigma}
    \frac{1}{2}\,I(A;B|R)(\mathcal{U}_M(\tilde{\rho})) &= \frac{1}{2}\,I(A;B|R)(\mathcal{U}_T(\mathcal{U}_u(\tilde{\rho})))\nonumber\\
    &\ge I_{\mathrm{as}}(A;B)(T) - \frac{1}{2}\, I(A;B|R)(\mathcal{U}_u(\tilde{\rho}))\nonumber\\
    &\ge S_{\mathrm{as}}(A)(T) + S_{\mathrm{as}}(B)(T) - 2N\ln\frac{e}{2}  - N\ln\left\|G\right\|_\infty\,.
\end{align}
The claim follows by taking the infimum of \eqref{eq:Isigma} over $\tilde{\rho}$.
\end{proof}
\end{prop}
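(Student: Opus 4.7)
The plan is to prove the two bounds separately, both by combining the polar decomposition $M=Tu$ with the generalized strong subadditivity of \autoref{prop:IMSC} together with three standard properties of Gaussian entropies: Gaussian states maximize the von Neumann entropy at fixed covariance matrix, $S(C)(\sigma_G)$ is monotone in $G$, and $S_{\mathrm{as}}(C)(\lambda G)=S_{\mathrm{as}}(C)(G)+N_C\ln\lambda$ for $\lambda>0$. The overall strategy parallels the pure-state argument of \autoref{prop:main}, with two modifications: the squashed entanglement is controlled from above by half the mutual information, and the lower bound has to be carried out uniformly in an arbitrary finite-dimensional extension $\tilde\rho$ of $\rho$ on $ABR$.

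For the upper bound I would start from $E_{\mathrm{sq}}(\mathcal{U}_M(\rho))\le\tfrac{1}{2}I(A;B)(\mathcal{U}_M(\rho))$, obtained by taking the trivial extension $R$ in the definition of $E_{\mathrm{sq}}$, and then drop the nonnegative term $S(AB)(\mathcal{U}_M(\rho))$ to reach $I(A;B)(\mathcal{U}_M(\rho))\le S(A)(\mathcal{U}_M(\rho))+S(B)(\mathcal{U}_M(\rho))$. The marginal covariance matrix on $A$ is $F_A MGM^\intercal F_A^\intercal$; since $u$ is orthogonal we have $MGM^\intercal = T\,uGu^\intercal\,T \le \|G\|_\infty\, T^2$, so by the Gaussian max-entropy property and monotonicity one gets $S(A)(\mathcal{U}_M(\rho))\le S_{\mathrm{as}}(A)(T^2)+N_A\ln\|G\|_\infty$, and likewise for $B$. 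Summing and halving gives the stated inequality.

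For the lower bound I would fix an arbitrary finite-dimensional extension $\tilde\rho$ of $\rho$ on $ABR$ and apply \autoref{prop:IMSC} with the positive-definite symplectic matrix $T$ to the state $\mathcal{U}_u(\tilde\rho)$, obtaining
\begin{equation*}
I(A;B|R)(\mathcal{U}_M(\tilde\rho))\;\ge\;2\,I_{\mathrm{as}}(A;B)(T)-I(A;B|R)(\mathcal{U}_u(\tilde\rho))\,.
\end{equation*}
The remaining task is to show that the $u$-term is a controlled error. Using the standard bounds $I(A;B|R)\le 2S(A)$ and $I(A;B|R)\le 2S(B)$ (each following from $|S(X|R)|\le S(X)$) and averaging the two, together with the fact that $u$ orthogonal implies $uGu^\intercal\le\|G\|_\infty\id$, one bounds the marginal entropies of $\mathcal{U}_u(\rho)$ by those of $\sigma_{\|G\|_\infty\id}$ and obtains $I(A;B|R)(\mathcal{U}_u(\tilde\rho))\le N\ln(e\|G\|_\infty/2)$. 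Because $\det T=1$, one has $I_{\mathrm{as}}(A;B)(T)=S_{\mathrm{as}}(A)(T)+S_{\mathrm{as}}(B)(T)-N\ln(e/2)$; substituting, taking the infimum over $\tilde\rho$, and dividing by $2$ gives the stated lower bound (with some slack in the additive constants that the statement absorbs).

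The whole argument is largely a careful bookkeeping of entropy inequalities once \autoref{prop:IMSC} is in hand, and no single step is individually deep. The main subtlety is ensuring that the contribution from the orthogonal factor $u$ is an $O(\ln\|G\|_\infty)$ correction independent of the size of $T$, so that upon later composing with \autoref{cor:LambdaT} the entire linear-in-$t$ growth of the squashed entanglement comes from the $T$-part and matches the subsystem exponent $\Lambda_A$.
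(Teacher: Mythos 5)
Your proposal is correct and follows essentially the same route as the paper's proof: the upper bound via $E_{\mathrm{sq}}\le\tfrac12 I(A;B)$, subadditivity, and the Gaussian maximum-entropy plus monotonicity chain $MGM^\intercal\le\|G\|_\infty T^2$; the lower bound by applying \autoref{prop:IMSC} with $T$ to $\mathcal{U}_u(\tilde\rho)$, controlling $I(A;B|R)(\mathcal{U}_u(\tilde\rho))\le N\ln\frac{e\|G\|_\infty}{2}$ by averaging the $A$ and $B$ bounds, using $\det T=1$, and taking the infimum over extensions. The only difference is cosmetic bookkeeping of the additive constants, which, as you note, the stated bound absorbs.
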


\begin{thm}[Linear growth of squashed entanglement\label{th:squashed}]
For any (generally mixed) initial state $\rho$ with finite covariance matrix and for any time-dependent quadratic Hamiltonian $\hat{H}(t)$ inducing a symplectic evolution $M(t)$ such that the limiting matrix $L(M)$ exists, the squashed entanglement (or CMI entanglement) with respect to the bipartition $AB$ grows asymptotically as
\begin{align}
    E_{\mathrm{sq}}(\rho(t)) = \Lambda_A\, t + o(t)\quad\text{as}\quad t\to\infty\,,
\end{align}
where the subsystem exponent $\Lambda_A$ is independent of the initial state $\rho$ and can be computed according to \autoref{prop:subsystem-exponent}.
\begin{proof}
Let $\sigma_{\mathbbm{1}}$ be the vacuum state of $AB$.
On the one hand, we have from \autoref{prop:mains} and \autoref{cor:LambdaT}
\begin{align}
     E_{\mathrm{sq}}(\rho(t)) &\le \frac{1}{2}\,S_{\mathrm{as}}(A)(T(t)^2) + \frac{1}{2}\,S_{\mathrm{as}}(B)(T(t)^2) + O(1)\nonumber\\
     &= \frac{1}{2}\,S_2(A)(\mathcal{U}_{T(t)}(\sigma_{\mathbbm{1}})) + \frac{1}{2}\,S_2(B)(\mathcal{U}_{T(t)}(\sigma_{\mathbbm{1}})) + O(1) = \Lambda_A\,t + o(t)\,.
\end{align}
On the other hand, we still have from \autoref{prop:mains} and \autoref{cor:LambdaT}
\begin{align}
     E_{\mathrm{sq}}(\rho(t)) &\ge S_{\mathrm{as}}(A)(T) + S_{\mathrm{as}}(B)(T) +O(1)\nonumber\\
     &= S_2(A)\left(\mathcal{U}_{\sqrt{T(t)}}(\sigma_{\mathbbm{1}})\right) + S_2(B)\left(\mathcal{U}_{\sqrt{T(t)}}(\sigma_{\mathbbm{1}})\right) + O(1) = \Lambda_A\,t + o(t)\,.
\end{align}
The claim follows.
\end{proof}
\end{thm}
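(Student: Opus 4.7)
The plan is to mirror the proof strategy of \autoref{th:entropy}, but to use the two-sided bounds on the squashed entanglement from \autoref{prop:mains} in place of the pure-state bound from \autoref{prop:main}. I would start by applying the polar decomposition $M(t)=T(t)\,u(t)$ from~\eqref{eq:polar-decomposition}, so that $T(t)$ is symplectic and positive definite while $u(t)$ is symplectic and orthogonal. Feeding this into \autoref{prop:mains} with the fixed initial covariance matrix $G$ of $\rho$ gives upper and lower bounds on $E_{\mathrm{sq}}(\rho(t))$ in terms of $S_{\mathrm{as}}(A)(T(t)^k)$ and $S_{\mathrm{as}}(B)(T(t)^k)$ for $k=1,2$, together with constants that depend on $N$, $N_A$ and $\|G\|_\infty$ but not on $t$. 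These constants will be absorbed into the $o(t)$ correction.

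For the upper bound, \autoref{prop:mains} yields
\begin{align}
E_{\mathrm{sq}}(\rho(t)) \le \tfrac{1}{2}\,S_{\mathrm{as}}(A)(T(t)^2) + \tfrac{1}{2}\,S_{\mathrm{as}}(B)(T(t)^2) + O(1).
\end{align}
Since $T(t)$ is symmetric, $T(t)^2 = T(t)\,\id\,T(t)^\intercal$ is the covariance matrix of the pure Gaussian state $\mathcal{U}_{T(t)}(\sigma_{\mathbbm{1}})$, and the remark following \eqref{eq:fpalpha} identifies each $S_{\mathrm{as}}$ with the corresponding $S_2$ up to an additive constant. For a pure state $S_2(A) = S_2(B)$, and by \autoref{prop:Gaussian} together with \autoref{cor:LambdaT} (the subsystem exponent of $T(t)$ equals $\Lambda_A$) each of these Rényi entropies grows as $\Lambda_A t + o(t)$. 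Combining the prefactor $\tfrac{1}{2}$ with the sum of two equal terms gives the upper bound $\Lambda_A t + o(t)$.

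For the lower bound, \autoref{prop:mains} gives $E_{\mathrm{sq}}(\rho(t))\ge S_{\mathrm{as}}(A)(T(t)) + S_{\mathrm{as}}(B)(T(t)) + O(1)$. Writing $T(t) = \sqrt{T(t)}\,\id\,\sqrt{T(t)}^\intercal$ identifies each term with the $S_2$ entropy of a marginal of the pure Gaussian state $\mathcal{U}_{\sqrt{T(t)}}(\sigma_{\mathbbm{1}})$, again up to an additive constant. By \autoref{cor:LambdaT}, the subsystem exponent associated with $\sqrt{T(t)}$ is $\Lambda_A/2$, so each of the two $S_{\mathrm{as}}$ terms grows as $(\Lambda_A/2)t + o(t)$, and their sum recovers $\Lambda_A t + o(t)$.

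The only conceptual subtlety---and the reason both bounds land on the same leading coefficient---is that the upper bound involves $T(t)^2$ whereas the lower bound involves $T(t)$. The matching factors $\tfrac{1}{2}$ versus $1$ in \autoref{prop:mains}, combined with the scaling $L(\sqrt{T})=L(T)/2$ from \autoref{cor:LambdaT}, compensate exactly. Everything else is a direct assembly of earlier results, so the main thing I would verify explicitly is that the error constants produced by \autoref{prop:mains} are genuinely $t$-independent for a fixed initial state with finite covariance matrix.
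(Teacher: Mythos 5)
Your proposal is correct and follows essentially the same route as the paper's own proof: the polar decomposition $M(t)=T(t)\,u(t)$, the two-sided bounds of \autoref{prop:mains} with $t$-independent error constants, the identification of $S_{\mathrm{as}}(T^2)$ and $S_{\mathrm{as}}(T)$ with Rényi-$2$ entropies of the pure Gaussian states $\mathcal{U}_{T(t)}(\sigma_{\mathbbm{1}})$ and $\mathcal{U}_{\sqrt{T(t)}}(\sigma_{\mathbbm{1}})$, and the compensation between the prefactors $\tfrac{1}{2}$ vs.\ $1$ and the exponents $\Lambda_A$ vs.\ $\Lambda_A/2$ from \autoref{cor:LambdaT}. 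No gaps.
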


\section{Logarithmic growth}\label{sec:logarithmic}
Apart from the linear scaling, it was also shown in~\cite{hackl2018entanglement} that for certain quadratic Hamiltonians, known as metastable, there is also logarithmic contribution to the growth of the entanglement entropy. This result was proven rigorously for Gaussian initial states and time-independent Hamiltonians containing such metastable part, but numerical evidence led to the conjecture that also this behavior is generic for arbitrary and potentially non-Gaussian initial states.

It is therefore a natural question whether we can use the same techniques that allowed us to prove the linear growth for arbitrary initial states to also prove that the logarithmic growth is more general. Unfortunately, the answer is negative.
We will show this by providing two counterexamples, where taking the large time-limit and performing the minimization of the right-hand side of~\eqref{eq:main-ineq} do not commute. This does not imply that the conjecture is false, but only that we cannot prove it using the techniques based on the inequality~\eqref{eq:main-ineq}.

\subsection{Classical counterexample}
We start with a counterexample in classical probability, which has all the ingredients of the quantum counterexample and is easier to understand.
We remind that the Shannon differential entropy \cite{cover2012elements} of a random variable $Z$ taking values in $\mathbb{R}^N$ is
\begin{equation}
    S(Z) = -\int_{\mathbb{R}^N} p(z)\ln p(z)\,d^Nz\,,
\end{equation}
where $p$ is the probability density of $Z$.

The classical counterpart of a state of a bosonic quantum system is a probability distribution on its phase space.
The symplectic form does not play any role in the classical counterexample.
Therefore, instead of a bipartite quantum system $AB$ we consider two (generically correlated) random variables $X$ and $Y$ with real values, finite average energy and smooth joint probability density.
The counterpart of the time-dependent symplectic transformation $M(t)$ is a time-dependent linear redefinition of $X$ and $Y$.
We choose
\begin{equation}
    X(t) = X + t\,Y\,,\qquad Y(t) = Y\,.
\end{equation}
Since in the classical setting there cannot be entanglement, we consider the asymptotic scaling of the mutual information between $X(t)$ and $Y(t)$.
On the one hand, for any fixed joint probability distribution of $XY$, such mutual information grows logarithmically with time.
Indeed, we have for $t\to\infty$
 \begin{align}
     I(X+t\,Y;Y) &= S(X+t\,Y) - S(X+t\,Y|Y) = S(X+t\,Y) - S(X|Y)\nonumber\\
     &= S(X/t + Y) + \ln t - S(X|Y) = \ln t + S(Y) - S(X|Y) + o(1)\,,
 \end{align}
 where we have used that
 \begin{equation}
     \lim_{t\to\infty}S(X/t + Y) = S(Y)\,.
 \end{equation}
 On the other hand, we have for any fixed $t\in\mathbb{R}$
 \begin{equation}\label{eq:infX12}
     \inf_{XY}\left(I(X(t);Y(t)) + I(X;Y)\right) = 0\,,
 \end{equation}
 where the infimum is performed over all the joint probability distribution for $XY$ with finite average energy and smooth joint probability density.
Indeed, let $X$ and $Y$ be independent Gaussian random variables with variances $1$ and $\epsilon^2$, respectively.
 Then,
 \begin{align}
     I(X+t\,Y;Y) + I(X;Y) &= I(X+t\,Y;Y) = S(X+t\,Y) - S(X|Y)\nonumber\\
     &= S(X+t\,Y) - S(X) = \frac{1}{2}\ln\left(1 + t^2\epsilon^2\right)\,,
 \end{align}
 which tends to $0$ for $\epsilon\to0$.
Then, performing the infimum over the joint probability distribution of $XY$ before the limit $t\to\infty$ changes the asymptotic scaling of the mutual information.

\subsection{Quantum counterexample}
For our quantum counterexample, we consider a system with two bosonic degrees of freedom and basis $\hat{\xi}=(\hat{\xi}_A,\hat{\xi}_B)\equiv(\hat{q}_1,\hat{p}_1,\hat{q}_2,\hat{p}_2)$ describing the respective subsystems $A$ and $B$. We further consider the quadratic Hamiltonian
\begin{align}
    \hat{H}=\frac{1}{2}h_{ab}\hat{\xi}^a\hat{\xi}^b=\frac{1}{2}(\hat{p}_1\hat{q}_2+\hat{q}_2\hat{p}_1)\,.
\end{align}
It is easy to check that this Hamiltonian is metastable (as defined in~\cite{hackl2018entanglement}), as the symplectic generator given by
\begin{align}
    F=\Omega h\equiv\left(\begin{array}{cc|cc}
     & 1    &  &\\
     -1 &&&\\
     \hline
     &&&1\\
     &&-1& 
    \end{array}\right)\left(\begin{array}{cc|cc}
     &    &  &\\
     &&1&\\
     \hline
     &1&&\\
     &&& 
    \end{array}\right)=\left(\begin{array}{cc|cc}
     &    & 1 &\\
     &&&\\
     \hline
     &&&\\
     &-1&& 
    \end{array}\right)
\end{align}
is clearly nilpotent with $F^2=0$. The resulting time-dependent symplectic transformation
\begin{align}
    M(t)=e^{t F}\equiv\left(\begin{array}{cc|cc}
     1&    & t &\\
     &1&&\\
     \hline
     &&1&\\
     &-t&& 1
    \end{array}\right)
\end{align}
will stretch a generic two-dimensional parallelepiped, such that its volume grows linearly in time leading to a logarithmic growth of the entanglement entropy (due to the logarithm in~\eqref{eq:volume-stretching}). This is exactly the setup where the conjecture of~\cite{hackl2018entanglement} applies, so we may want to consider the right-hand side of~\eqref{eq:main-ineq} for this choice of $M(t)$. Using the same line of arguments as in the proof of~\autoref{th:entropy}, we can bound the right-hand side of~\eqref{eq:main-ineq} as
\begin{align}
\begin{split}
    \hspace{-4mm}\mathrm{RHS}&\leq\inf_{\sigma\,\mathrm{Gaussian}}\frac{\ln\det[M(t)G_\sigma {M(t)}^\intercal]_A+\ln\det[M(t)G_\sigma {M(t)}^\intercal]_B-\ln\det G_\sigma}{2}+2\ln\frac{e}{2}\\
    &\leq\inf_{a,b,c,d}\ln\sqrt{\frac{(bc+adt^2)(ab+cdt^2)}{ab^2c}}+2\ln\frac{e}{2}\leq 2\ln\frac{e}{2}\,,
\end{split}
\end{align}
where we bounded the infimum over all Gaussian state covariance matrices by restricting to diagonal covariance matrices $G_\sigma\equiv\mathrm{diag}(ab,a/b,cd,c/d)$ with $a\geq 1$, $c\geq1$ $b>0$ and $d>0$, so that the mutual information $I(A;B)(\sigma)$ vanishes. The argument of the logarithm approaches $1$ as we take the limit $b\to\infty$. The remaining constant $2\ln\tfrac{e}{2}$ results from bounding the mutual information by Rényi entropies, \ie
\begin{align}
\begin{split}
    I(A;B)(\sigma)&=S(A)(\sigma)+S(B)(\sigma)-S(AB)(\sigma)\\
    &\leq S_2(A)(\sigma)+\ln\tfrac{e}{2}+S_2(B)(\sigma)+\ln\tfrac{e}{2}-S_2(AB)(\sigma)\,,
\end{split}
\end{align}
based on~\autoref{prop:bounding-corridor} with $N_A=N_B=1$.

In summary, we find that we can bound the right-hand side of~\eqref{eq:main-ineq} to be smaller than the constant $N\ln\tfrac{e}{2}$, which is independent of $t$. We also saw explicitly how the order of limits mattered, \ie if we could first choose $G_\sigma$ and then take the limit $t\to\infty$, we would find a logarithmic behavior. Unfortunately, the limits do not commute and finding the infimum at fixed $t$ shows that the right-hand side of~\eqref{eq:main-ineq} does not grow logarithmically in time. Consequently, we do not see a straightforward extension of our general proof of linear growth that would also cover a logarithmic contribution. Let us emphasize again that this does not imply that the conjecture of~\cite{hackl2018entanglement} is false, but only that the inequality~\eqref{eq:main-ineq} alone is not sufficient to prove it.

\section{Applications}\label{sec:applications}
In the same way that Gaussian states are an approximation to general semi-classical states, quadratic Hamiltonians are an approximation of weakly interacting Hamiltonians. While our result completely lifts the requirement of the state to be Gaussian (compared to previous proofs), it heavily relies on the Hamiltonian to be quadratic, so that the time evolution can be encoded in the symplectic transformation $M(t)$ of the classical phase space. We will now discuss how our results apply to physical systems with (weakly) interacting Hamiltonians and periodically driven quantum systems, including certain quantum field theory models.

\subsection{Physical systems with unstable Hamiltonians}
The linear growth of the entanglement entropy studied in this manuscript is always due to an exponential squeezing, \ie due to certain entries of the covariance matrix growing as $e^{\lambda t}$. This also implies an exponential growth of the energy, which certainly will not be sustainable for sufficiently large $t$, as the system will either experience back reactions or higher order terms of the approximate quadratic Hamiltonian will kick in. Either way, the phase of linear growth must always be understood as an intermediate phenomena, which will typically transition to a phase of saturation due to thermalization or equilibration, once the physical model of a quadratic model breaks down.

\begin{figure}[t]
\begin{center}
	\begin{tikzpicture}
	\draw (0,0) node{\includegraphics[width=14cm]{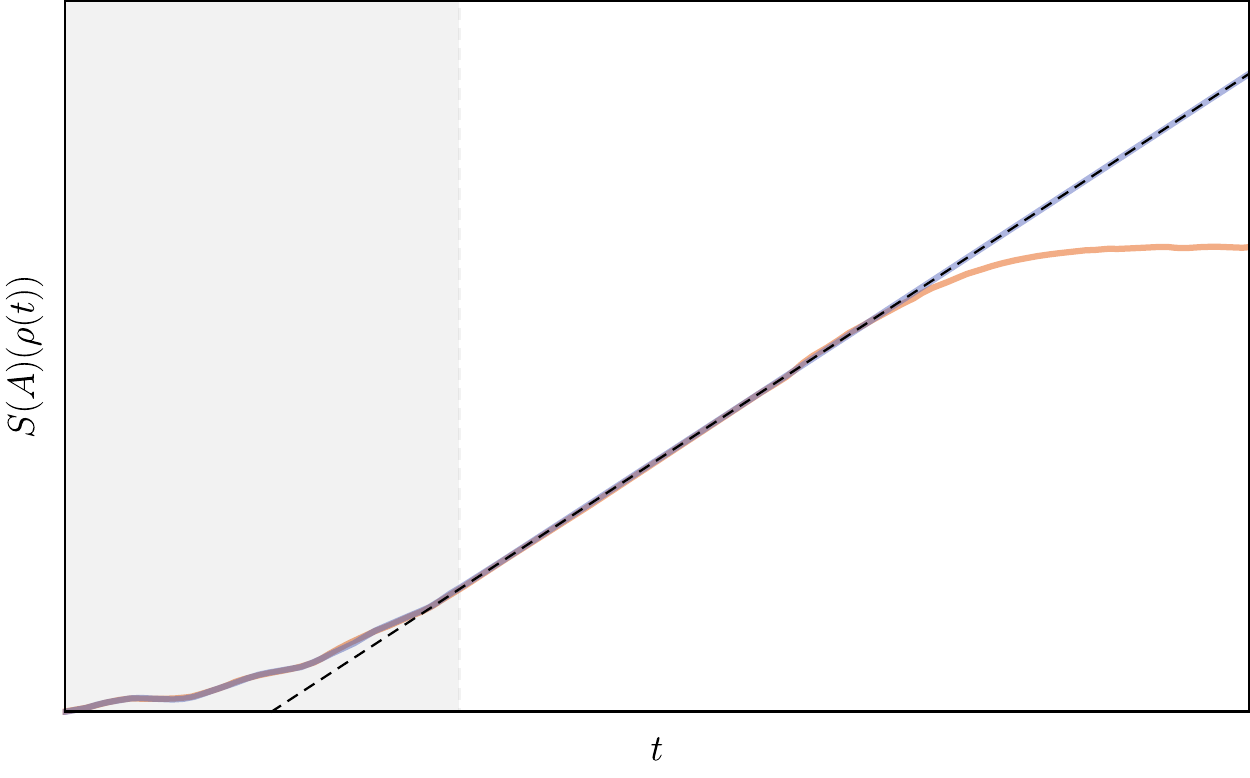}};
	
	\draw[thick,->] (-4.2,-2.2) node[above]{\textbf{initial transient}} -- (-4,-3);
	
	\draw[thick,->] (.2,-2.2) node[below]{\textbf{intermediate phase}} -- (0,-1.4);
	
	\begin{scope}[scale=.8,xshift=5.8cm,yshift=-2.5cm]
	\draw[thick,->] (0,-1.5) -- (0,1.5) node[right]{$V(\hat{q})$};
	\draw[thick,<->] (-1.5,0) -- (1.5,0) node[right]{$\hat{q}$};
	\draw[thick,scale=0.5, domain=-3:3, smooth, variable=\x, orange] plot ({\x}, {-\x*\x+0.15*\x*\x*\x*\x});
	\draw (0,2.3) node{stable non-quadratic};
	\draw (0,2.9) node{\textbf{saturation}};
	\draw[thick,->] (0,3.2) -- (.2,4);
	\end{scope}
	
	\begin{scope}[,scale=.8,xshift=.8cm,yshift=2cm]
	\draw[thick,->] (0,-1.5) -- (0,1.5) node[right]{$V(\hat{q})$};
	\draw[thick,<->] (-1.5,0) -- (1.5,0) node[right]{$\hat{q}$};
	\draw[thick,scale=0.5, domain=-1.7:1.7, smooth, variable=\x, blue] plot ({\x}, {-\x*\x});
	\draw (0,2.3) node{unstable quadratic};
	\draw (0,2.9) node{\textbf{unbounded growth}};
	\draw[thick,->] (2.7,2.9) -- (5,1);
	\end{scope}
	
	\end{tikzpicture}
\end{center}
	\caption{\emph{Illustration: physical Hamiltonians vs. their quadratic approximation.} While an unstable quadratic Hamiltonian leading to unbounded linear growth of the entanglement entropy (in blue) is unphysical, we can still use the predicted linear growth the for intermediate growth phase of a stable non-quadratic Hamiltonian (in orange). We schematically indicate the respective potential $V(\hat{q})$ of a single bosonic mode with Hamiltonian $\hat{H}=\frac{1}{2}\hat{p}^2+V(\hat{q})$.}
\label{fig:growth-picture}
\end{figure}

By approximating time-evolution as linear dynamics under quadratic Hamiltonians, we were therefore able to remove the the phase of saturation and thereby enabled a rigorous treatment of the linear growth phase in the $t\to\infty$ limit. In practice, we will only see this phase if there is separation of scales, \ie if the time-scale of saturation is larger than the time-scale on which the phase of linear growth happens. While one can always fine-tune the initial state to avoid this separation of scale, there are numerous systems where the phase of linear growth will be relevant.

The simplest application of our result are generic quadratic Hamiltonians $\hat{H}=\frac{1}{2}h_{ab}\hat{\xi}^a\hat{\xi}^b+f_a\hat{\xi}^a$ without any explicit time-dependence. Note, however, that only \emph{unstable} Hamiltonians will give a non-zero production rate $\Lambda_{A}$, which is equivalent to requiring that the symplectic generator $K^a{}_b=\sum_c\Omega^{ac}h_{cb}$ has some real eigenvalues. The prime example of such an unstable quadratic Hamiltonian is the inverted harmonic oscillator $\hat{H}=\frac{1}{2}\hat{p}^2+V(\hat{q})$ with $V(\hat{q})=-\frac{1}{2}\hat{q}^2$, which classically corresponds to rolling down an inverted quadratic potential. Such a Hamiltonian is not bounded from below and is therefore typically rendered as unphysical. However, such a quadratic potential can arise as quadratic expansion of a higher order potential, such as $\hat{V}(\hat{q})=-\hat{q}^2+\epsilon\hat{q}^4$ with $\epsilon>0$, that is bounded from below. In order to observe entanglement growth, we require several modes (split into a subsystem $A$ and its complement $B$), which are coupled through at least one unstable mode. As illustrated in~\autoref{fig:growth-picture} and also studied numerically in~\cite{hackl2018entanglement}, the asymptotically computed linear growth rate $\Lambda_A$ based on~\autoref{th:entropy} for the (unphysical) quadratic Hamiltonian can still provide a good prediction for the intermediate phase of linear growth for a physical non-quadratic Hamiltonian. The same to the squashed entanglement $E_{\mathrm{sq}}$ based on the results of~\autoref{th:squashed}. More generally, the same reasoning applies to a general bosonic Hamiltonian $\hat{H}=\hat{H}_0+\epsilon\hat{H}_I$, where $\hat{H}_0$ is at most quadratic (in $\hat{\xi}^a$), while $\hat{H}_I$ is of higher order, as long as $\epsilon$ is sufficiently small.

Unstable quadratic Hamiltonians may also arise as stroboscobic description of time-dependent quadratic Hamiltonians in the context of periodically driven quantum systems. Given a time-dependent quadratic Hamiltonian $\hat{H}(t)$ with periodicity condition $\hat{H}(t+\tau)=\hat{H}(t)$, the unitary time evolution operator
\begin{align}
    U(t)=\mathcal{T}\exp\left(-\ii\int^{t}_{0}\hat{H}(t')\,dt'\right)
\end{align}
satisfies the periodicity condition $U(n\tau+t)=U(t)\,U(\tau)^n$ for $t\in[0,\tau]$. The long-time asymptotics of the entanglement entropy at times $t=n\tau$ will therefore be governed by the \emph{time-independent} quadratic Hamiltonian $\hat{H}_{\mathrm{strob}}=\frac{1}{\tau}\log{U(\tau)}$, which is also called the stroboscopic Hamiltonian (as it describes the evolution at discrete snapshots of the system). In this case, $\hat{H}(t)$ can be a perfectly normal physical Hamiltonian that is bounded from below, but the resulting stroboscobic Hamiltonian $\hat{H}_{\mathrm{strob}}$ may turn out to be unstable due to $\hat{H}(t)$ pumping energy into the system. Again, we expect the unstable quadratic approximation to break down eventually, \eg when the system starts to back-react or the environment runs out of energy that can be injected into the periodically driven system. Again, the illustration of~\autoref{fig:growth-picture} applies, where we use the long-time asymptotics of the quadratic approximation to understand the intermediate phase of linear growth of a physical model.

The dynamical instability due to a periodic driving of a classical or quantum system is also known as \emph{parametric resonance}~\cite{calzetta1988nonequilibrium,berges2003parametric}. The classical evolution can be efficiently described by Floquet theory, where the eigenvalues of the time-dependent symplectic evolution $M(t)$ are approximately $e^{\mu_i t}$. The exponents $\mu_i$ are generally complex and known as \emph{Floquet exponents}, while their real parts correspond to the Lyapunov exponents discussed in~\autoref{sec:linear}, \ie we have $\lambda_i=\mathrm{Re}(\mu_i)$. Examples of such systems include periodically driven Bose-Einstein condensates~\cite{busch2014quantum}, the dynamical Casimir effect~\cite{dodonov2010current,romualdo2019entanglement} and several cosmological models~\cite{traschen1990particle,kofman1997towards,allahverdi2010reheating,amin2015nonperturbative,mrowczynski1995reheating,campo2005inflationary,polarski1996semiclassicality,kiefer2000entropy}.

\subsection{Quantum field theory subsystems}
Strictly speaking, our main theorems do not directly apply to quantum field theories, as we assume a finite number of bosonic degrees of freedom. While the Hilbert space is already infinite-dimensional for a single bosonic mode (just like the quantum harmonic oscillator), we assumed that the phase space has finite dimension. On the contrary, the phase space of a quantum field theory has infinite dimension, and we must therefore ask ourselves whether our analysis still applies.

The key question is what type of subsystems one considers: If one studies local regions of spacetime (such as causal diamonds), the phase space of the associated subsystem will have infinite dimension. Even worse, it is well-known that even free quantum field theories are constructed on a tensor product decomposition over individual modes in momentum space, which cannot directly transformed into a tensor product of local modes. This is captured by the Reeh-Schlieder theorem~\cite{schlieder1965some} and also leads to the divergent entanglement entropy found in holographic calculations~\cite{ryu2006aspects,nishioka2009holographic}. Our results will not apply to subsystems consisting of \emph{all} degrees of freedom in a local region, as the number of the associated bosonic modes will be infinite.

However, there are still physically interesting subsystems studied in quantum field theory and cosmology that only capture a finite number of bosonic modes. The simplest example are coupled pairs of momentum modes $\left(\vec{k},-\vec{k}\right)$ in a spacetime with translational invariance. Such an example was already studied in~\cite{bianchi2018linear} in the context of inflation and our results on non-Gaussian initial states apply directly, as each pair of modes can be described independently as two bosonic modes becoming entangled.

The other important example where our results can apply is a subsystem of finitely many modes embedded in the field theory. Such a subsystem could represent a detector that couples locally to a certain number (but not all!) of field modes~\cite{bianchi2019entropy,chen2020towards}. Given a scalar field $\varphi(x)$ with conjugate momentum $\pi(x)$, we can construct a finite number of local modes
\begin{align}
    \hat{q}_i=\int_{\mathbb{R}^3} Q_i(x) \,\hat{\varphi}(x)\,d^3x\,,\quad \hat{p}_j=\int_{\mathbb{R}^3} P_j(x)\, \hat{\pi}(x)\,d^3x\,,
\end{align}
where the smearing functions $Q_i,\,P_j: \mathbb{R}^3\to\mathbb{R}$ must be chosen such that $[\hat{q}_i,\hat{p}_j]=\ii\delta_{ij}$. Here, we assumed that we have chosen a fixed foliation of our spacetime with spatial slices equal to $\mathbb{R}^3$, but the results can be generalized to other cases. Moreover, we can even choose $P_i(x)$ and $Q_i(x)$ to be compactly supported on the spatial slice, so that the subsystem $A$ with phase space spanned by $\hat{\xi}^a_A\equiv(\hat{q}_1,\hat{p}_1,\dots,\hat{q}_{N_A},\hat{p}_{N_A})$ is local. While the classical time evolution $M(t)$ will act on the infinite dimensional classical phase space, the subsystem itself will be described by a finite-dimensional phase space, such that the number of modes $N_A$ appearing in \autoref{th:entropy} and \autoref{th:squashed} will be finite. Such subsystems were studied numerically for Gaussian initial states in~\cite{bianchi2018linear} leading to excellent agreement with theoretical predictions and we expect the same for the entanglement entropy and squashed entanglement of non-Gaussian initial states, as described in the present manuscript.

\section{Summary}\label{sec:summary}
The main result of this manuscript is a rigorous proof of the large time asymptotics of the entanglement entropy when a bosonic system is evolved by a quadratic Hamiltonian with instabilities (in the sense of non-vanishing classical Lyapunov exponents). The present work is built upon and heavily relies on a number of previous results, numerical studies and conjectures~\cite{asplund2016entanglement,bianchi2015entanglement,bianchi2018linear,berenstein2018toy,hackl2018entanglement,de2021generalized} that paved the way for a general proof. Consequently, the result itself does not come as a surprise, but rather concludes the effort of making something rigorous that is quite intuitive: quadratic quantum Hamiltonians are the closest to classical evolution that one can get (as already known by Ehrenfest~\cite{ehrenfest1927bemerkung}) and even when one evolves a highly non-classical state, the leading order behavior of the entanglement entropy should be determined by the classical Lyapunov exponents and agree with the Kolmogorov--Sinai entropy rate. Our proof thereby combines classical techniques of characterizing instabilities of Hamiltonian flows and recently discovered properties of the von Neumann entropy. In particular, we demonstrate that the same tools can also be used to prove an analogous result for mixed states.

Our work therefore settles the original conjecture formulated in~\cite{asplund2016entanglement}, which was further refined in~\cite{bianchi2015entanglement,bianchi2018linear,hackl2018entanglement}. While it is a natural question if the same tools can also be used to prove that the entanglement entropy grows logarithmically for meta-stable quadratic Hamiltonians and non-Gaussian initial states, as was conjectured in~\cite{hackl2018entanglement}, we show that this cannot be easily achieved.

\section*{Acknowledgments}
LH thanks Eugenio Bianchi, Ranjan Modak, Marcos Rigol and Nelson Yokomizo for inspiring discussions on the topic during previous collaborations. LH acknowledges support by the Alexander von Humboldt Foundation.

\bibliographystyle{unsrt}
\bibliography{biblio}

\begin{thebibliography}{10}

\bibitem{calabrese2009entanglement}
Pasquale Calabrese and John Cardy.
\newblock Entanglement entropy and conformal field theory.
\newblock {\em Journal of Physics A: Mathematical and Theoretical},
  42(50):504005, 2009.

\bibitem{kim2013ballistic}
Hyungwon Kim and David~A Huse.
\newblock Ballistic spreading of entanglement in a diffusive nonintegrable
  system.
\newblock {\em Physical review letters}, 111(12):127205, 2013.

\bibitem{roberts2015localized}
Daniel~A Roberts, Douglas Stanford, and Leonard Susskind.
\newblock Localized shocks.
\newblock {\em Journal of High Energy Physics}, 2015(3):51, 2015.

\bibitem{cotler2016entanglement}
Jordan~S Cotler, Mark~P Hertzberg, M{\'a}rk Mezei, and Mark~T Mueller.
\newblock Entanglement growth after a global quench in free scalar field
  theory.
\newblock {\em Journal of High Energy Physics}, 2016(11):1--27, 2016.

\bibitem{mezei2017entanglement}
M{\'a}rk Mezei and Douglas Stanford.
\newblock On entanglement spreading in chaotic systems.
\newblock {\em Journal of High Energy Physics}, 2017(5):65, 2017.

\bibitem{zurek1994decoherence}
Wojciech~Hubert Zurek and Juan~Pablo Paz.
\newblock Decoherence, chaos, and the second law.
\newblock {\em Physical Review Letters}, 72(16):2508, 1994.

\bibitem{srednicki1993entropy}
Mark Srednicki.
\newblock Entropy and area.
\newblock {\em Physical Review Letters}, 71(5):666, 1993.

\bibitem{eisert2010colloquium}
Jens Eisert, Marcus Cramer, and Martin~B Plenio.
\newblock Colloquium: {A}rea laws for the entanglement entropy.
\newblock {\em Reviews of modern physics}, 82(1):277, 2010.

\bibitem{nielsen2010quantum}
M.A. Nielsen and I.L. Chuang.
\newblock {\em Quantum Computation and Quantum Information: 10th Anniversary
  Edition}.
\newblock Cambridge University Press, 2010.

\bibitem{wilde2017quantum}
Mark~M. Wilde.
\newblock {\em Quantum Information Theory}.
\newblock Cambridge University Press, 2017.

\bibitem{holevo2019quantum}
A.S. Holevo.
\newblock {\em Quantum Systems, Channels, Information: A Mathematical
  Introduction}.
\newblock Texts and Monographs in Theoretical Physics. De Gruyter, 2019.

\bibitem{alba2017entanglement}
Vincenzo Alba and Pasquale Calabrese.
\newblock Entanglement and thermodynamics after a quantum quench in integrable
  systems.
\newblock {\em Proceedings of the National Academy of Sciences},
  114(30):7947--7951, 2017.

\bibitem{alba2018entanglement}
Vincenzo Alba and Pasquale Calabrese.
\newblock Entanglement dynamics after quantum quenches in generic integrable
  systems.
\newblock {\em SciPost Physics}, 4(3):017, 2018.

\bibitem{shenker2014black}
Stephen~H Shenker and Douglas Stanford.
\newblock Black holes and the butterfly effect.
\newblock {\em Journal of High Energy Physics}, 2014(3):67, 2014.

\bibitem{maldacena2016bound}
Juan Maldacena, Stephen~H Shenker, and Douglas Stanford.
\newblock A bound on chaos.
\newblock {\em Journal of High Energy Physics}, 2016(8):1--17, 2016.

\bibitem{asplund2016entanglement}
Curtis~T Asplund and David Berenstein.
\newblock Entanglement entropy converges to classical entropy around periodic
  orbits.
\newblock {\em Annals of Physics}, 366:113--132, 2016.

\bibitem{bianchi2015entanglement}
Eugenio Bianchi, Lucas Hackl, and Nelson Yokomizo.
\newblock Entanglement entropy of squeezed vacua on a lattice.
\newblock {\em Physical Review D}, 92(8):085045, 2015.

\bibitem{bianchi2018linear}
Eugenio Bianchi, Lucas Hackl, and Nelson Yokomizo.
\newblock Linear growth of the entanglement entropy and the
  {K}olmogorov--{S}inai rate.
\newblock {\em Journal of High Energy Physics}, 2018(3):1--70, 2018.

\bibitem{hackl2018entanglement}
Lucas Hackl, Eugenio Bianchi, Ranjan Modak, and Marcos Rigol.
\newblock Entanglement production in bosonic systems: {L}inear and logarithmic
  growth.
\newblock {\em Physical Review A}, 97(3):032321, 2018.

\bibitem{berenstein2018toy}
David Berenstein.
\newblock A toy model for time evolving {QFT} on a lattice with controllable
  chaos.
\newblock {\em arXiv preprint arXiv:1803.02396}, 2018.

\bibitem{de2021generalized}
Giacomo De~Palma and Dario Trevisan.
\newblock The generalized strong subadditivity of the von {N}eumann entropy for
  bosonic quantum {G}aussian systems.
\newblock {\em arXiv preprint arXiv:2105.05627}, 2021.

\bibitem{tucci1999quantum}
Robert~R Tucci.
\newblock Quantum entanglement and conditional information transmission.
\newblock {\em arXiv preprint quant-ph/9909041}, 1999.

\bibitem{tucci2000separability}
Robert~R Tucci.
\newblock Separability of {D}ensity {M}atrices and {C}onditional {I}nformation
  {T}ransmission.
\newblock {\em arXiv preprint quant-ph/0005119}, 2000.

\bibitem{tucci2000entanglement}
Robert~R Tucci.
\newblock Entanglement of formation and conditional information transmission.
\newblock {\em arXiv preprint quant-ph/0010041}, 2000.

\bibitem{tucci2001relaxation}
Robert~R Tucci.
\newblock Relaxation method for calculating quantum entanglement.
\newblock {\em arXiv preprint quant-ph/0101123}, 2001.

\bibitem{tucci2001entanglement}
Robert~R Tucci.
\newblock Entanglement of {B}ell {M}ixtures of {T}wo {Q}ubits.
\newblock {\em arXiv preprint quant-ph/0103040}, 2001.

\bibitem{tucci2002entanglement}
Robert~R Tucci.
\newblock Entanglement of distillation and conditional mutual information.
\newblock {\em arXiv preprint quant-ph/0202144}, 2002.

\bibitem{christandl2004squashed}
Matthias Christandl and Andreas Winter.
\newblock ``squashed entanglement'': an additive entanglement measure.
\newblock {\em Journal of mathematical physics}, 45(3):829--840, 2004.

\bibitem{brandao2011faithful}
Fernando~GSL Brandao, Matthias Christandl, and Jon Yard.
\newblock Faithful squashed entanglement.
\newblock {\em Communications in Mathematical Physics}, 306(3):805, 2011.

\bibitem{seshadreesan2015renyi}
Kaushik~P Seshadreesan, Mario Berta, and Mark~M Wilde.
\newblock R{\'e}nyi squashed entanglement, discord, and relative entropy
  differences.
\newblock {\em Journal of Physics A: Mathematical and Theoretical},
  48(39):395303, 2015.

\bibitem{shirokov2016squashed}
Maksim~E Shirokov.
\newblock Squashed entanglement in infinite dimensions.
\newblock {\em Journal of Mathematical Physics}, 57(3):032203, 2016.

\bibitem{christandl2007unifying}
Matthias Christandl, Artur Ekert, Micha{\l} Horodecki, Pawe{\l} Horodecki,
  Jonathan Oppenheim, and Renato Renner.
\newblock Unifying classical and quantum key distillation.
\newblock In {\em Theory of Cryptography Conference}, pages 456--478. Springer,
  2007.

\bibitem{li2014relative}
Ke~Li and Andreas Winter.
\newblock Relative entropy and squashed entanglement.
\newblock {\em Communications in Mathematical Physics}, 326(1):63--80, 2014.

\bibitem{wilde2016squashed}
Mark~M Wilde.
\newblock Squashed entanglement and approximate private states.
\newblock {\em Quantum Information Processing}, 15(11):4563--4580, 2016.

\bibitem{seshadreesan2015fidelity}
Kaushik~P Seshadreesan and Mark~M Wilde.
\newblock Fidelity of recovery, squashed entanglement, and measurement
  recoverability.
\newblock {\em Physical Review A}, 92(4):042321, 2015.

\bibitem{li2018squashed}
Ke~Li and Andreas Winter.
\newblock Squashed {E}ntanglement, $k$-{E}xtendibility, {Q}uantum {M}arkov
  {C}hains, and {R}ecovery {M}aps.
\newblock {\em Foundations of Physics}, 48(8):910--924, 2018.

\bibitem{adesso2007coexistence}
Gerardo Adesso, Marie Ericsson, and Fabrizio Illuminati.
\newblock Coexistence of unlimited bipartite and genuine multipartite
  entanglement: {P}romiscuous quantum correlations arising from discrete to
  continuous-variable systems.
\newblock {\em Physical Review A}, 76(2):022315, 2007.

\bibitem{avis2008distributed}
David Avis, Patrick Hayden, and Ivan Savov.
\newblock Distributed compression and multiparty squashed entanglement.
\newblock {\em Journal of Physics A: Mathematical and Theoretical},
  41(11):115301, 2008.

\bibitem{yang2009squashed}
Dong Yang, Karol Horodecki, Michal Horodecki, Pawel Horodecki, Jonathan
  Oppenheim, and Wei Song.
\newblock Squashed entanglement for multipartite states and entanglement
  measures based on the mixed convex roof.
\newblock {\em IEEE Transactions on Information Theory}, 55(7):3375--3387,
  2009.

\bibitem{hackl2018aspects}
Lucas~Fabian Hackl.
\newblock {\em Aspects of {G}aussian {S}tates {E}ntanglement, {S}queezing and
  {C}omplexity}.
\newblock The Pennsylvania State University, 2018.

\bibitem{hackl2020geometry}
Lucas Hackl, Tommaso Guaita, Tao Shi, Jutho Haegeman, Eugene Demler, and
  J~Ignacio Cirac.
\newblock Geometry of variational methods: dynamics of closed quantum systems.
\newblock {\em SciPost Physics}, 2020.

\bibitem{hackl2020bosonic}
Lucas Hackl and Eugenio Bianchi.
\newblock Bosonic and fermionic gaussian states from {K}{\"a}hler structures.
\newblock {\em arXiv preprint arXiv:2010.15518}, 2020.

\bibitem{konig2014entropy}
Robert K{\"o}nig and Graeme Smith.
\newblock The {E}ntropy {P}ower {I}nequality for quantum systems.
\newblock {\em IEEE Transactions on Information Theory}, 60(3):1536--1548,
  2014.

\bibitem{konig2016corrections}
Robert K{\"o}nig and Graeme Smith.
\newblock Corrections to ``{T}he {E}ntropy {P}ower {I}nequality for {Q}uantum
  {S}ystems'' [mar 14 1536-1548].
\newblock {\em IEEE Transactions on Information Theory}, 62(7):4358--4359,
  2016.

\bibitem{de2014generalization}
Giacomo De~Palma, Andrea Mari, and Vittorio Giovannetti.
\newblock A generalization of the {E}ntropy {P}ower {I}nequality to bosonic
  quantum systems.
\newblock {\em Nature Photonics}, 8(12):958--964, 2014.

\bibitem{de2015multimode}
Giacomo De~Palma, Andrea Mari, Seth Lloyd, and Vittorio Giovannetti.
\newblock Multimode quantum {E}ntropy {P}ower {I}nequality.
\newblock {\em Physical Review A}, 91(3):032320, 2015.

\bibitem{koenig2015conditional}
Robert K{\"o}nig.
\newblock The conditional {E}ntropy {P}ower {I}nequality for {G}aussian quantum
  states.
\newblock {\em Journal of Mathematical Physics}, 56(2):022201, 2015.

\bibitem{de2017gaussian}
Giacomo De~Palma.
\newblock {\em Gaussian optimizers and other topics in quantum information}.
\newblock PhD thesis, Scuola Normale Superiore, Pisa (Italy), September 2016.
\newblock Supervisor: Prof. Vittorio Giovannetti; arXiv:1710.09395.

\bibitem{huber2017geometric}
Stefan Huber, Robert K{\"o}nig, and Anna Vershynina.
\newblock Geometric inequalities from phase space translations.
\newblock {\em Journal of Mathematical Physics}, 58(1):012206, 2017.

\bibitem{de2018conditional}
Giacomo De~Palma and Dario Trevisan.
\newblock The conditional {E}ntropy {P}ower {I}nequality for bosonic quantum
  systems.
\newblock {\em Communications in Mathematical Physics}, 360(2):639--662, 2018.

\bibitem{de2018gaussian}
Giacomo De~Palma, Dario Trevisan, Vittorio Giovannetti, and Luigi Ambrosio.
\newblock Gaussian optimizers for entropic inequalities in quantum information.
\newblock {\em Journal of Mathematical Physics}, 59(8):081101, 2018.

\bibitem{huber2018conditional}
Giacomo De~Palma and Stefan Huber.
\newblock The conditional {E}ntropy {P}ower {I}nequality for quantum additive
  noise channels.
\newblock {\em Journal of Mathematical Physics}, 59(12):122201, 2018.

\bibitem{de2019entropy}
Giacomo De~Palma.
\newblock The {E}ntropy {P}ower {I}nequality with quantum conditioning.
\newblock {\em Journal of Physics A: Mathematical and Theoretical},
  52(8):08LT03, 2019.

\bibitem{de2019new}
Giacomo De~Palma.
\newblock New lower bounds to the output entropy of multi-mode quantum
  {G}aussian channels.
\newblock {\em IEEE Transactions on Information Theory}, 65(9):5959--5968,
  2019.

\bibitem{cover2012elements}
T.M. Cover and J.A. Thomas.
\newblock {\em Elements of Information Theory}.
\newblock Wiley, 2012.

\bibitem{calzetta1988nonequilibrium}
Esteban Calzetta and Bei-Lok Hu.
\newblock Nonequilibrium quantum fields: {C}losed-time-path effective action,
  {W}igner function, and {B}oltzmann equation.
\newblock {\em Physical Review D}, 37(10):2878, 1988.

\bibitem{berges2003parametric}
J{\"u}rgen Berges and Julien Serreau.
\newblock Parametric resonance in quantum field theory.
\newblock {\em Physical review letters}, 91(11):111601, 2003.

\bibitem{busch2014quantum}
Xavier Busch, Renaud Parentani, and Scott Robertson.
\newblock Quantum entanglement due to a modulated dynamical {C}asimir effect.
\newblock {\em Physical Review A}, 89(6):063606, 2014.

\bibitem{dodonov2010current}
VV~Dodonov.
\newblock Current status of the dynamical {C}asimir effect.
\newblock {\em Physica Scripta}, 82(3):038105, 2010.

\bibitem{romualdo2019entanglement}
Ivan Romualdo, Lucas Hackl, and Nelson Yokomizo.
\newblock Entanglement production in the dynamical {C}asimir effect at
  parametric resonance.
\newblock {\em Physical Review D}, 100(6):065022, 2019.

\bibitem{traschen1990particle}
Jennie~H Traschen and Robert~H Brandenberger.
\newblock Particle production during out-of-equilibrium phase transitions.
\newblock {\em Physical Review D}, 42(8):2491, 1990.

\bibitem{kofman1997towards}
Lev Kofman, Andrei Linde, and Alexei~A Starobinsky.
\newblock Towards the theory of reheating after inflation.
\newblock {\em Physical Review D}, 56(6):3258, 1997.

\bibitem{allahverdi2010reheating}
Rouzbeh Allahverdi, Robert Brandenberger, Francis-Yan Cyr-Racine, and Anupam
  Mazumdar.
\newblock Reheating in inflationary cosmology: theory and applications.
\newblock {\em Annual Review of Nuclear and Particle Science}, 60:27--51, 2010.

\bibitem{amin2015nonperturbative}
Mustafa~A Amin, Mark~P Hertzberg, David~I Kaiser, and Johanna Karouby.
\newblock Nonperturbative dynamics of reheating after inflation: a review.
\newblock {\em International Journal of Modern Physics D}, 24(01):1530003,
  2015.

\bibitem{mrowczynski1995reheating}
Stanislaw Mr{\'o}wczy{\'n}ski and Berndt M{\"u}ller.
\newblock Reheating after supercooling in the chiral phase transition.
\newblock {\em Physics Letters B}, 363(1-2):1--4, 1995.

\bibitem{campo2005inflationary}
David Campo and Renaud Parentani.
\newblock Inflationary spectra and partially decohered distributions.
\newblock {\em Physical Review D}, 72(4):045015, 2005.

\bibitem{polarski1996semiclassicality}
David Polarski and Alexei~A Starobinsky.
\newblock Semiclassicality and decoherence of cosmological perturbations.
\newblock {\em Classical and Quantum Gravity}, 13(3):377, 1996.

\bibitem{kiefer2000entropy}
Claus Kiefer, David Polarski, and Alexei~A Starobinsky.
\newblock Entropy of gravitons produced in the early universe.
\newblock {\em Physical Review D}, 62(4):043518, 2000.

\bibitem{schlieder1965some}
Siegfried Schlieder.
\newblock Some remarks about the localization of states in a quantum field
  theory.
\newblock {\em Communications in Mathematical Physics}, 1(4):265--280, 1965.

\bibitem{ryu2006aspects}
Shinsei Ryu and Tadashi Takayanagi.
\newblock Aspects of holographic entanglement entropy.
\newblock {\em Journal of High Energy Physics}, 2006(08):045, 2006.

\bibitem{nishioka2009holographic}
Tatsuma Nishioka, Shinsei Ryu, and Tadashi Takayanagi.
\newblock Holographic entanglement entropy: an overview.
\newblock {\em Journal of Physics A: Mathematical and Theoretical},
  42(50):504008, 2009.

\bibitem{bianchi2019entropy}
Eugenio Bianchi and Alejandro Satz.
\newblock Entropy of a subalgebra of observables and the geometric entanglement
  entropy.
\newblock {\em Physical Review D}, 99(8):085001, 2019.

\bibitem{chen2020towards}
Yangang Chen, Lucas Hackl, Ravi Kunjwal, Heidar Moradi, Yasaman~K Yazdi, and
  Miguel Zilhao.
\newblock Towards spacetime entanglement entropy for interacting theories.
\newblock {\em Journal of High Energy Physics}, 2020(11):1--32, 2020.

\bibitem{ehrenfest1927bemerkung}
Paul Ehrenfest.
\newblock Bemerkung {\"u}ber die angen{\"a}herte {G}{\"u}ltigkeit der
  klassischen {M}echanik innerhalb der {Q}uantenmechanik.
\newblock {\em Zeitschrift f{\"u}r Physik}, 45(7-8):455--457, 1927.

\end{thebibliography}

\end{document}